\documentclass[runningheads]{llncs}

\usepackage[T1]{fontenc}
\usepackage{amsmath}

\usepackage{amssymb}
\usepackage{mathtools}
\usepackage{xspace}
\usepackage{xcolor}
\usepackage{booktabs}
\usepackage[most]{tcolorbox}
\usepackage{enumitem}
\usepackage{thmtools}
\setlist{nosep, leftmargin=*}
\setlist{itemsep=1pt, topsep=3pt, leftmargin=*}

\AtBeginDocument{%
  \setlength{\abovedisplayskip}{2pt plus 1pt minus 1pt}%
  \setlength{\belowdisplayskip}{2pt plus 1pt minus 1pt}%
  \setlength{\abovedisplayshortskip}{1pt plus 1pt}%
  \setlength{\belowdisplayshortskip}{1pt plus 1pt}%
}

\makeatletter
\renewcommand\section{\@startsection{section}{1}{\z@}%
  {-6\p@ \@plus -2\p@ \@minus -2\p@}%
  {3\p@ \@plus 1\p@}%
  {\normalfont\bfseries\large}}
\renewcommand\subsection{\@startsection{subsection}{2}{\z@}%
  {-4\p@ \@plus -1\p@}%
  {2\p@ \@plus 1\p@}%
  {\normalfont\bfseries}}
\renewcommand\subsubsection{\@startsection{subsubsection}{3}{\z@}%
  {-3\p@ \@plus -1\p@}%
  {1\p@ \@plus 1\p@}%
  {\normalfont\bfseries}}
\makeatother

\newcommand{\mypara}[1]{\smallskip\noindent\textbf{#1.}}
\newcommand{\mysubpara}[1]{\emph{#1.}}

\newcommand{\remove}[1]{}

\begin{document}
\title{Constitutional Governance in Metric Spaces}
\author{Ehud Shapiro\inst{1} \and Nimrod Talmon\inst{2}}
\authorrunning{Shapiro and Talmon}
\institute{London School of Economics and Weizmann Institute of Science \and Ben-Gurion University and Input Output}
\maketitle

\begin{abstract}
Computational social choice and algorithmic decision theory offer rich aggregation theory but no end-to-end process for egalitarian self-governance: aggregation, deliberation, amendment, and consensus are each considered in isolation, with key metric-space aggregators being NP-hard. 
Here, we propose \emph{constitutional governance in metric spaces}, integrating these stages into a protocol for constitutional governance. A community's \emph{legal corpus} comprises its \emph{laws} together with a \emph{constitution}, each being a point in a metric space, with the constitution specifying the supermajority threshold required to amend it. Members vote to amend the legal corpus by proposing their ideal points, followed by rounds of submitting \emph{public proposals} carrying \emph{supermajority public support}; a polynomial-time \emph{aggregation rule} scores each proposal, and a supported proposal whose score is positive and maximal for two rounds is adopted; if none is found the status quo is retained. 
Public proposals can be sourced from deliberation, vote aggregation, or AI mediation.  
With Constitutional Consensus, a community can run the constitutional governance protocol on members' personal computing devices (e.g., smartphones), achieving digital sovereignty.

We focus on the utility of the generalised median, and study the \emph{compromise gap} between best peak and unconstrained optimum---zero in one dimension, bounded in general, narrowed in simulation by a simple heuristic. We instantiate the framework to seven canonical settings --- electing officers, setting rates, allocating budgets, ranking priorities, selecting boards, drafting bylaws, and amending the constitution.
By drawing on metric-space aggregation, reality-aware social choice, supermajority amendment, constitutional consensus, deliberative coalition formation, and AI mediation, we provide a comprehensive framework for the constitutional governance of digital communities and organisations.

\keywords{Computational social choice \and Constitutional governance \and Metric spaces \and Efficient aggregation \and Deliberation \and Digital cooperatives}
\end{abstract}

\section{Introduction}
\label{section:introduction}

Computational social choice and algorithmic decision theory have produced a rich body of aggregation theory: voting rules, participatory budgeting, committee selection, ranking aggregation, and aggregation over metric spaces. A community wishing to govern itself today, however, still finds no end-to-end process to adopt. Existing contributions tend to address single stages---aggregation, deliberation, amendment, consensus---in isolation, and key aggregators are NP-hard in natural settings. Efficient, well-behaved \emph{integration} of these stages into a coherent governance process is missing.

Concretely, consider a digital community: a freelancers' guild, a neighbourhood association, a regional dog lovers' network. The community must elect officers, set rates, allocate a budget, rank strategic priorities, select a governance board, and draft and amend its own bylaws---each a different decision type, with different structure (unordered alternatives, one-dimensional scales, the simplex, permutations, subsets, strings). Today, a community wishing to do so would have to glue together heterogeneous mechanisms with no unifying contract, while accepting that several of the natural aggregators are computationally intractable.

We propose \emph{constitutional governance in metric spaces}, a unified framework to fill this gap. A community's \emph{legal corpus} comprises its \emph{laws}---substantive components such as rates, budgets, boards, and bylaws---together with a \emph{constitution}. The constitution specifies, per amendable component, including itself, a metric space~$(X, d)$, an aggregation rule~$\phi$, and a supermajority threshold~$\sigma \in [1/2, 1)$ required to amend it. Each member specifies an ideal element, serving as both vote and \emph{personal proposal}; subsequently, any member may submit a \emph{public proposal} from any source---deliberation among members, vote aggregation, or AI mediation---carrying supermajority public support under the revealed votes. Public supermajority support identifies supporters without requiring them to coordinate.
The constitutional governance rule scores each proposal against the status quo via~$\phi$, adopts the supported proposal of positive maximal score, and otherwise retains the status quo. Each component is amended at its own supermajority threshold via the same rule; the constitutional threshold $\sigma_C$ amending itself is the one self-referential case, and uses the $h$-rule of~\cite{abramowitz2021beginning}.

Operationally, the constitutional governance protocol may run on top of constitutional consensus~\cite{keidar2025constitutional} on members' personal computing devices (e.g., smartphones), independently of any global resource, rendering the community digitally sovereign---governed only by the constitution it has itself adopted and amended. The framework is readily applicable to the operation of the assemblies that govern communities and federations as envisioned in Grassroots Federation~\cite{shapiro2025GF}.

The framework integrates six lines of prior work. \emph{Aggregation over metric spaces}~\cite{bulteau2021aggregation} provides ideal elements, distance-induced preferences, a unified treatment across social choice settings, and algorithmic vote aggregation as a source of public proposals. \emph{Reality-aware social choice}~\cite{shapiro2018incorporating} contributes the status quo as an ever-present alternative, sidestepping Plott--McKelvey-style impossibilities. \emph{Supermajority-based constitutional amendment}~\cite{abramowitz2021beginning} contributes the $\sigma$-parameterisation and the self-referential $h$-rule for amending~$\sigma$ itself. \emph{Constitutional consensus}~\cite{keidar2025constitutional} contributes the operational ratification protocol. \emph{Deliberative coalition formation}~\cite{elkind2021united,elkind2022complexity} contributes coalition-derived public proposals. \emph{AI mediation}~\cite{briman2025ai} contributes AI-derived compromise proposals.

The paper is organised as follows. Section~\ref{section:framework} presents the framework; Section~\ref{section:framework_properties} its framework-level properties; Section~\ref{section:rules} develops the generalised median and the strategic properties it confers; Section~\ref{section:metrics} instantiates the seven decision types; Section~\ref{section:gap} defines and studies the compromise gap; Section~\ref{section:discussion} concludes. The appendices contain the proofs, related work, and full details of the settings, amendments, and simulations.

\section{Framework}
\label{section:framework}

We follow the spatial model of preferences~\cite{SpatialModel,bulteau2021aggregation}. A \emph{metric space} $(X, d)$ consists of a set $X$ and a metric $d : X \times X \to \mathbb{R}_{\ge 0}$ satisfying symmetry, $d(x, y) = 0 \iff x = y$, and the triangle inequality. We consider $n$ members. Member~$i$ specifies an \emph{ideal element} $v_i \in X$, interpreted as $i$'s most-preferred element; preferences are induced by distance, so $i$ prefers $x$ to $y$ iff $d(v_i, x) < d(v_i, y)$. We write $V = (v_1, \ldots, v_n)$.

\mypara{The Framework, Informally}\label{subsection:framework-informal} The framework specifies what each amendable component of a community's legal corpus carries and how proposals to amend it are scored. The legal corpus comprises the laws---substantive components such as a current commission rate, a current budget, a current board, and the current bylaws---together with the constitution, which specifies how each component, including itself, is amended. The constitution attaches to each component a metric space~$(X, d)$, an aggregation rule~$\phi$, and a supermajority threshold~$\sigma \in [1/2, 1)$. Each member specifies an ideal element of the space, serving as both vote and initial proposal. In addition, members may submit \emph{public proposals} from any source---own intuition, deliberation among members~\cite{elkind2021united,elkind2022complexity}, vote aggregation~\cite{bulteau2021aggregation}, or AI mediation~\cite{briman2025ai}; the framework specifies admissibility rather than provenance. The constitutional governance rule scores each proposal against the status quo via~$\phi$, adopts the supported proposal of positive maximal score, and otherwise retains the status quo. Each component is amended by a $\sigma_i$-supermajority via the same rule; the constitutional threshold $\sigma_C$ amending itself is the one self-referential case, and uses the $h$-rule of~\cite{abramowitz2021beginning}.

\begin{example}[A Running Example: A Freelancer Cooperative]
\label{subsection:running-example}
Five freelancers---Alice, Bob, Carol, Dan, Eve---found a digital cooperative for matching members with clients. Their initial constitution fixes, for ordinary decisions, $\phi$~=~generalised median and $\sigma = 1/2$; for constitutional amendment decisions, $\sigma = 2/3$.

Their first ordinary decision allocates the budget across marketing~($M$), infrastructure~($I$), and member benefits~($B$). The metric space is the simplex $\Delta^2$ with Euclidean distance; the status quo is $s = (1/3, 1/3, 1/3)$. In the voting round the members submit
\( v_1 = (0.5, 0.3, 0.2), \quad v_2 = (0.5, 0.2, 0.3), \quad v_3 = (0.3, 0.5, 0.2), \quad v_4 = (0.2, 0.4, 0.4), \quad v_5 = (0.2, 0.6, 0.2)\).
In a public-proposal round, a member submits the geometric median of $V$ as a public proposal, computed by an off-the-shelf solver---approximately $\hat c = (0.34, 0.40, 0.27)$.\footnote{The geometric median is the $L_1$ optimum, not the constitutional governance rule; here it serves as a public-proposal candidate, scored under the constitutional generalised median.} The round-2 proposal set is $V \cup \{\hat c\}$, and direct computation shows $\hat c$ attains the highest median utility among supported proposals; the epoch quiesces with $\hat c$ as the new budget.

A subsequent epoch raises the budget threshold from $1/2$ to $3/5$. This amends the budget's $\sigma$ and is decided by the same rule, requiring a $\sigma_C = 2/3$ supermajority. We thread this example through the formalisation that follows.
\end{example}

\mypara{Status Quo, Utility, and Proposal Set}\label{subsection:status-quo} \mysubpara{Status quo} A distinguished element $s \in X$ represents the current value of the component being amended. Following reality-aware social choice~\cite{shapiro2018incorporating}, $s$ is ever-present and evolving: when a proposal is adopted it becomes the new status quo for subsequent decisions. \mysubpara{Utility} The \emph{utility} of a proposal $p \in X$ for a member with ideal element $q$ is
\[ u(q, p) := d(q, s) - d(q, p) , \]
positive iff the member strictly prefers $p$ to $s$. By the triangle inequality $u(q, p) \le d(q, s)$, with equality at $p = q$. \mysubpara{Proposal set} The framework operates iteratively. At any round~$r$ within an epoch, member~$i$ holds an immutable \emph{vote} $v_i \in X$ (sealed at the start of the epoch; Section~\ref{subsection:protocol}) and a current \emph{public proposal} $c_i^r \in X \cup \{\bot\}$, which the member may update across rounds. The \emph{round-$r$ proposal set} is
\( P^r = V \cup \{ c_i^r : c_i^r \neq \bot \}, \qquad n \le |P^r| \le 2n\).
Earlier public proposals that have been overridden no longer appear in $P^r$; the round-$r$ aggregation runs over $P^r$ alone. Note, however, that $\varepsilon$-novelty (Section~\ref{subsection:protocol}) is enforced against \emph{all} previously-submitted public proposals in the epoch, active or overridden, so a withdrawn public proposal cannot be re-submitted unchanged. In the running example, $P^1 = V$ at the end of the voting round, and $P^2 = V \cup \{\hat c\}$ once the geometric-median public proposal is admitted.

\mypara{The Constitutional Governance Rule}\label{subsection:rule} The constitution names, per amendable component, an \emph{aggregation rule} $\phi : \mathbb{R}^n \to \mathbb{R}$ and a \emph{supermajority threshold} $\sigma \in [1/2, 1)$. A proposal $p$ is \emph{supported} at round~$r$ if it has \emph{supermajority public support}: $u(v_i, p) > 0$ for at least $\lceil \sigma n \rceil$ values of $i$. The supporting set $S(p) := \{i : u(v_i, p) > 0\}$ is publicly observable once votes are revealed. The \emph{utility vector} of $p$ is $u(p) = (u(v_1, p), \ldots, u(v_n, p))$, and its \emph{aggregate score} is $u_\phi(p) := \phi(u(p))$.

\begin{definition}[Constitutional governance rule, per round]
\label{definition:rule}
The rule applied at round~$r$ adopts the supported proposal $p^* \in P^r$ with positive maximal $u_\phi$, with tie breaking. If no supported proposal in $P^r$ has positive $u_\phi$, the round has no winner.
\end{definition}

The two roles are kept distinct: $\sigma$ determines which proposals are \emph{admissible} (the supermajority gate), while $\phi$ determines which among them is \emph{adopted} (the maximal score).\footnote{For even~$n$ and $\sigma = 1/2$, the threshold $\lceil \sigma n \rceil = n/2$ represents ``at least half'' rather than a strict majority: a supported proposal needs only $n/2$ members to strictly prefer it, leaving open the possibility of a $1/2$--$1/2$ split. A community that prefers strict-majority semantics for even~$n$ should set $\sigma$ slightly above $1/2$ (so that $\lceil \sigma n \rceil \ge n/2 + 1$). This is a design choice the community makes constitutionally.}

\mypara{The Proposal Protocol}\label{subsection:protocol} The rule runs in \emph{epochs}, each governed by the prevailing constitution. \mysubpara{Voting round} All members simultaneously submit sealed votes; once revealed, votes remain fixed for the duration of the epoch. \mysubpara{Public-proposal rounds} Once votes are revealed, each subsequent round permits members to submit, update, or withdraw their public proposal $c_i^r$, subject to five admissibility conditions: at most one current public proposal per member; \emph{proposer-preference}, $u(v_i, c_i^r) > 0$; \emph{public support}, $c_i^r$ is supported under the now-known votes; \emph{$\varepsilon$-novelty}, $c_i^r$ lies at distance $\ge \varepsilon$ from every vote and every previously-submitted public proposal in the epoch; and \emph{improvement}, $u_\phi(c_i^r)$ strictly exceeds the preceding round's winning score (or zero if there was no winner). \mysubpara{Termination} Definition~\ref{definition:rule} runs at the end of each round. The epoch terminates at \emph{two-round quiescence}: either the same proposal wins two consecutive rounds, or no proposal is supported for two consecutive rounds. The outcome is the quiescent winner, or $s$ if no proposal was supported.

\mypara{Self-Amendment}\label{subsection:self-amendment} Self-amendment is built into the legal corpus.
\begin{tcolorbox}[title=The Legal Corpus and its Amendment]
The legal corpus comprises the laws and the constitution. The constitution attaches to each component~$i$ a metric space $(X_i, d_i)$, an aggregation rule $\phi_i$, a threshold $\sigma_i$, and a novelty distance $\varepsilon_i$. Each component is amended by a $\sigma_i$-supermajority via Definition~\ref{definition:rule}. The constitution itself is a component, with its own threshold $\sigma_C$. The $\sigma_i$ of any component other than the constitution is amended by a $\sigma_C$-supermajority via Definition~\ref{definition:rule}. The constitutional threshold $\sigma_C$ amending itself is the one self-referential case, and it is amended by the $h$-rule of~\cite{abramowitz2021beginning}.
\end{tcolorbox}
The running example's second epoch---raising the budget threshold from $1/2$ to $3/5$---amends the budget's $\sigma$, decided by a $\sigma_C$-supermajority via Definition~\ref{definition:rule}. Full details are deferred to Appendix~\ref{appendix:amendment}.

\section{Framework Properties}
\label{section:framework_properties}

The constitutional governance rule (Definition~\ref{definition:rule}) is parametrised by an aggregation rule~$\phi$. This section collects framework-level properties---existence and complexity, anonymity and neutrality, reality-awareness, Condorcet-cycle immunity---each stated with the minimal hypothesis on~$\phi$ it requires. Any instance (the generalised median of Section~\ref{section:rules}, or any aggregator a community adopts---e.g., the mean in Appendix~\ref{appendix:mean}) inherits these whenever it meets the hypothesis.

\mypara{Complexity} Each round of the constitutional governance rule runs in $O(|P^r| \cdot T_\phi(n))$ time, where $T_\phi(n)$ is the time to evaluate $\phi$ on a vector of length $n$; polynomial whenever $\phi$ is. Per-round complexity for the seven application settings under the generalised median is given in Table~\ref{table:applications}.

\mypara{Finite Termination}\label{subsection:termination} The proposal protocol of Section~\ref{subsection:protocol} runs in epochs of indefinite length, but every epoch terminates after finitely many rounds. We state the guarantee under a mild structural hypothesis.

\begin{definition}[Totally bounded]
A metric space $(X, d)$ is \emph{totally bounded} if for every $\varepsilon > 0$ and every bounded $B \subseteq X$, $B$ admits a finite cover by $\varepsilon$-balls.
\end{definition}

\begin{restatable}{proposition}{proptermination}
\label{proposition:termination}
Fix a constitutionally-specified novelty distance $\varepsilon > 0$. If $(X, d)$ is totally bounded, every epoch of the proposal protocol terminates after finitely many public-proposal rounds.
\end{restatable}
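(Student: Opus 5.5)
The plan is a packing argument. I will show that every public proposal admitted in the epoch lies in one fixed bounded set $B \subseteq X$. By $\varepsilon$-novelty, the admitted proposals are then pairwise $\varepsilon$-separated points of $B$. Total boundedness says such a set is finite. Once the supply of admissible proposals is finite, I will argue that two-round quiescence must occur within finitely many further rounds.

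\textbf{Step 1 (boundedness).} Take any admitted public proposal $p$. By the public-support condition, $u(v_i,p)>0$ for at least $\lceil \sigma n\rceil \ge 1$ members $i$. For such an $i$ this means $d(v_i,p) < d(v_i,s)$. Setting $R := \max_i d(v_i,s)$ and using the triangle inequality, $d(s,p) \le d(s,v_i)+d(v_i,p) < 2R$. So every admitted proposal lies in the ball $B := B(s,2R)$, which depends only on the sealed votes and the status quo. Both are fixed for the whole epoch.

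\textbf{Step 2 (finitely many proposals).} Let $Q$ be the set of all public proposals submitted during the epoch, including ones later overridden. By $\varepsilon$-novelty, any two distinct members of $Q$ are at distance at least $\varepsilon$. Cover $B$ by finitely many $\varepsilon/2$-balls, say $N$ of them, using total boundedness. Each such ball has diameter below $\varepsilon$, so it contains at most one point of $Q$. Hence $|Q| \le N$.

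\textbf{Step 3 (termination).} This is the step needing care. Since each round that introduces a new submission adds an element to $Q$, at most $N$ rounds contain a new submission. In a round with no new submission and no update, the active set $P^r$ is unchanged from the previous round. With a deterministic tie-breaking rule, the winner (or the absence of a winner) is then also unchanged, so two-round quiescence triggers immediately. A withdrawal can change the winner, but only finitely often, since each member can withdraw at most once per submission.

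The remaining gap is whether a member could alternate withdrawals and "updates" indefinitely without submitting anything novel. An update counts as a submission and is therefore charged to $Q$. A withdrawal must be preceded by a submission, so withdrawals are at most $|Q|$ in number. Counting, the number of rounds in which $P^r$ changes is at most $2N$. After that, quiescence follows within two rounds, giving termination after at most about $2N+2$ public-proposal rounds. The improvement condition is not needed for this bound. If tie-breaking is not assumed deterministic on a fixed proposal set, I would state that assumption explicitly.
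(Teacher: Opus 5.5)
Your proposal is correct and follows the paper's packing argument. You confine every admitted public proposal to a bounded set, cover that set by finitely many $\varepsilon/2$-balls, and use $\varepsilon$-novelty to allow at most one proposal per ball; the paper gets its bounded set $\bigcup_i B_{d(v_i,s)}(v_i)$ from proposer-preference, while your $B(s,2R)$ from public support does the same job. Your Step~3 is more careful than the paper, which only asserts that the winner repeats once no new admissible proposals exist: you explicitly count withdrawals and make deterministic tie-breaking an explicit assumption, both of which the paper leaves implicit.
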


Proof and verification that the seven application settings of Section~\ref{section:metrics} are totally bounded are in Appendix~\ref{appendix:termination-proof}.

\mypara{Anonymity and Neutrality} These are treated in Appendix~\ref{appendix:anonymity}.

\mypara{Reality-Awareness} Following reality-aware social choice~\cite{shapiro2018incorporating}, the framework treats the status quo as an ever-present alternative that is retained unless a proposal genuinely improves on it. By Definition~\ref{definition:rule}, if $\phi(\mathbf{0}) = 0$ then $u_\phi(s) = 0$, so the constitutional governance rule retains the status quo unless some supported proposal has $u_\phi > 0$. The generalised median satisfies $\phi(\mathbf{0}) = 0$, as does any anonymous, monotone aggregator that agrees with indifference at zero.

\mypara{Condorcet-Cycle Immunity} Classical majority rule over three or more alternatives can cycle: a majority may prefer $a$ to $b$, $b$ to $c$, and $c$ to $a$. In one-dimensional settings with single-peaked preferences, Black's median voter theorem~\cite{blackmedian} guarantees a Condorcet winner. In multidimensional settings, however, Plott~\cite{plott1967notion} showed that a Condorcet winner exists only under a radial-symmetry condition that generically fails, and McKelvey~\cite{mckelvey1976intransitivities} showed that when it fails, the top cycle of majority rule spans the entire space. The constitutional governance rule reduces every comparison to a scalar score against the fixed status quo $s$, so no pairwise majority structure arises and no cycling is possible.

\begin{restatable}{theorem}{thmcondorcet}
\label{theorem:condorcet}
For any aggregation rule~$\phi$, any threshold $\sigma \in [1/2, 1)$, and any metric space $(X, d)$ of any dimension or structure: at every round~$r$, either a supported proposal with positive maximal~$u_\phi$ is adopted, or the round has no winner. No cycling among proposals is possible.
\end{restatable}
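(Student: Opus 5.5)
The argument is essentially definitional; the work is to make precise what ``no cycling'' means. I will reduce the round-$r$ decision to maximising a real-valued function over a finite set, and then use transitivity of the order on $\mathbb{R}$.

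\emph{Dichotomy.} By the definition of the proposal set, $P^r = V \cup \{c_i^r : c_i^r \neq \bot\}$ is finite, with $|P^r| \le 2n$. Let $A^r := \{p \in P^r : |S(p)| \ge \lceil \sigma n \rceil,\ u_\phi(p) > 0\}$ be the supported proposals with positive score. Each $p \in P^r$ has a well-defined utility vector $u(p) \in \mathbb{R}^n$, obtained from $d$ and $s$ alone, so $u_\phi(p) = \phi(u(p))$ is a well-defined real number. This uses no property of $\phi$, $\sigma$, or $(X,d)$ beyond $d$ being real-valued.
\begin{itemize}
\item If $A^r = \emptyset$, Definition~\ref{definition:rule} declares that the round has no winner.
\item If $A^r \neq \emptyset$, then $A^r$ is a nonempty finite set and $u_\phi$ restricted to it takes finitely many real values, so a maximum $M > 0$ exists. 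The set of maximisers is nonempty, and the tie-breaking rule selects one $p^*$, which is adopted.
\end{itemize}
These two cases are exhaustive and mutually exclusive, which gives the first claim.

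\emph{No cycling.} The rule never compares proposals pairwise by majority. It compares each one to the fixed $s$ through the scalar $u_\phi$. Whatever binary relation the rule induces on $P^r$ (``$p$ beats $q$'' iff $u_\phi(p) > u_\phi(q)$, both admissible) is the pullback of $>$ on $\mathbb{R}$, and is therefore a strict weak order. A cycle $p_1, \dots, p_k, p_1$ would require $u_\phi(p_1) > \cdots > u_\phi(p_k) > u_\phi(p_1)$, which is impossible.

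Across rounds within an epoch, the improvement admissibility condition makes successive winning scores strictly increase whenever a new public proposal wins. If the winner instead stays the same, it is quiescent. I would also note that once a proposal is displaced it cannot return unchanged, by $\varepsilon$-novelty. So the winner sequence cannot revisit a winner after being displaced, which rules out cycling over time as well.

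The main obstacle I expect is not technical but interpretive: ``no cycling'' has to be pinned down as a statement about the induced relation, namely acyclicity via the real-valued score, and possibly about the inter-round winner sequence. The care goes into making the cross-round part precise. One subtlety is that a vote $v_i$ in $V$ is persistent and could win again after a public proposal is withdrawn. I would handle this by arguing that the improvement condition bounds new admissions from below, and I would fall back to stating the no-cycling claim per round if the cross-round version needs further hypotheses.
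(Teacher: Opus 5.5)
Your per-round argument is correct and is essentially the paper's proof. The paper also notes that each $p \in P^r$ receives a real score, and that a finite set of reals has a maximum, so the tie-break adopts a proposal whenever some supported proposal has $u_\phi > 0$. It likewise argues that because every proposal is scored only against $s$, the order on $\mathbb{R}$ precludes cycling. Your pullback-of-$>$ formulation is a more explicit version of that last step.

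The cross-round paragraph goes beyond what the paper proves, and its central claim is false, so drop it rather than keep it as a supplement. Withdrawals are not subject to the improvement condition, so the winner can revert to a proposal it previously displaced. Suppose a vote $v_j \in V$ wins at round $r$, member $k$'s public proposal $c_k$ with a higher score wins at round $r+1$, and $k$ withdraws $c_k$ at round $r+2$. Then $v_j$ wins again. The same happens with another member's still-active public proposal in place of $v_j$.

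The $\varepsilon$-novelty condition only forbids \emph{resubmitting} a previously submitted public proposal. It says nothing about a vote, or an active proposal, regaining the top score. So ``the winner sequence cannot revisit a winner after being displaced'' fails. The improvement condition cannot repair this, because it constrains only admissions, not withdrawals.

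Across rounds, the only additional guarantee is that each epoch terminates. That needs the total-boundedness hypothesis of Proposition~\ref{proposition:termination}, which this theorem does not assume. Your fallback to the per-round reading is therefore the right choice, and it is exactly the statement the paper proves.
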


Plott--McKelvey concern pairwise majority comparisons among alternatives, which the constitutional governance rule never forms: each proposal is scored against the fixed status quo $s$ alone, yielding a real-valued score that is well-defined in any number of dimensions. In one dimension under the generalised median, the rule's winner is the positional median (Proposition~\ref{proposition:1d-characterisation}), which by Black's theorem~\cite{blackmedian} is the Condorcet winner whenever one exists.

\mypara{Strategy-Proofness} Strategy-proofness depends on $\phi$; for the generalised median at majority threshold, no misreport weakly dominates sincere voting under the proposal protocol (Theorem~\ref{theorem:protocol-sp}, Section~\ref{section:rules}).

\mypara{Monotonicity}\label{subsection:monotonicity} Monotonicity is treated in Appendix~\ref{appendix:monotonicity-counterexample}.

\section{The Generalised Median}
\label{section:rules}

The framework is parametrised by an aggregation rule~$\phi$. We develop the generalised median as our worked aggregation rule, and study the strategic properties it confers on the constitutional governance rule. The generalised median satisfies the framework-level hypotheses of Section~\ref{section:framework_properties}: $O(n)$-computable, symmetric, and zero at indifference. The mean is treated as an alternative utilitarian instantiation in Appendix~\ref{appendix:mean}.

\begin{definition}[Generalised median]
\label{definition:median}
For $\sigma \in [1/2, 1)$, the \emph{generalised median at threshold $\sigma$} is
\( \phi_\sigma(u) := \text{the } \lceil \sigma n \rceil\text{-th largest entry of } u_1, \ldots, u_n\).
\end{definition}

For $\sigma = 1/2$ and odd $n$, $\phi_\sigma$ coincides with the ordinary median. The generalised median has a useful equivalence with support: $\phi_\sigma(u(p)) > 0$ if and only if at least $\lceil \sigma n \rceil$ entries of $u(p)$ are strictly positive, that is, if and only if $p$ is supported. Under the generalised median instance, the supermajority gate and the positivity of $u_\phi$ \emph{coincide}, so the rule reduces to: adopt the proposal with maximal $\phi_\sigma(u(p))$ if any has $\phi_\sigma > 0$, else retain the status quo. Note that personal proposals are subject to the same gate as public proposals: a personal proposal $v_i$ is supported only if at least $\lceil \sigma n \rceil$ members strictly prefer it to the status quo under their revealed votes.

The generalised median is motivated by cooperative governance and robustness. Economic models of cooperative behaviour require the cooperative's decision process to be representable as the maximisation of an objective function~\cite{hart1996governance}; for $\sigma = 1/2$, the rule maximises the utility of the median member. The median is also robust to outliers: a member with an extreme ideal element shifts the value of $\phi_\sigma$ only insofar as their utility crosses the $\lceil \sigma n \rceil$-th position, not in proportion to the magnitude of their utility.

\begin{remark}[The median beyond one dimension]
\label{remark:median-spaces}
In 1D, Black's median voter theorem justifies the median directly. Beyond 1D, the classical theorem does not apply, but the median retains a privileged status on \emph{median spaces}~\cite{nehring2007structure}---trees, hypercubes, products of lines---as the unique strategy-proof, anonymous, efficient rule. For non-median spaces, the framework sidesteps the multidimensional impossibilities entirely via the status quo (Theorem~\ref{theorem:condorcet}), and cooperative governance provides an independent justification: economic models require maximisation of an objective function, which the median utility supplies.
\end{remark}

\mypara{Majoritarity}\label{subsection:majoritarity} For $\sigma > 1/2$, the median follows a majority. 
\begin{definition}[$\sigma$-majoritarity]
\label{definition:majoritarity}
An aggregation method is \emph{$\sigma$-majoritarian} if, whenever at least $\lceil \sigma n \rceil$ members share an ideal element $w$ that is preferred to $s$, the rule adopts $w$.
\end{definition}

\begin{restatable}{proposition}{propmajoritarian}
\label{proposition:median-majoritarian}
For $\sigma > 1/2$ or odd $n$, the constitutional governance rule with $\phi = \phi_\sigma$ is $\sigma$-majoritarian.
\end{restatable}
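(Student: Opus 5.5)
The plan is to show that the shared ideal element $w$ is the \emph{strict} maximiser of $\phi_\sigma(u(\cdot))$ over all of $X$. The protocol then has no way to displace it. Let $W$ be the set of members whose ideal element is $w$, with $k := |W| \ge \lceil \sigma n \rceil$, and set $D := d(w,s)$. The hypothesis that $w$ is preferred to $s$ gives $D > 0$.

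First I will lower-bound the score of $w$. Every $i \in W$ has $u(v_i,w) = d(w,s) - d(w,w) = D$. So at least $\lceil \sigma n\rceil$ entries of $u(w)$ are $\ge D$, and therefore $\phi_\sigma(u(w)) \ge D > 0$. By the equivalence noted after Definition~\ref{definition:median}, $w$ is supported. Since $w = v_i$ for $i\in W$, it lies in $V \subseteq P^r$ at every round.

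Second, I will upper-bound the score of any $p \neq w$. For $i \in W$ we have $u(v_i,p) = D - d(w,p) < D$. Hence at most $n-k \le n - \lceil \sigma n\rceil$ entries of $u(p)$ can be $\ge D$. The key counting step, which is the only place the hypothesis enters, is that $n - \lceil\sigma n\rceil < \lceil \sigma n\rceil$, i.e.\ $\lceil \sigma n\rceil > n/2$. This holds when $\sigma>1/2$, because $\lceil\sigma n\rceil \ge \sigma n > n/2$. It also holds when $\sigma = 1/2$ and $n$ is odd, because $\lceil n/2\rceil = (n+1)/2$. It fails exactly for $\sigma=1/2$ with $n$ even, matching the exclusion in the statement. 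Consequently the $\lceil\sigma n\rceil$-th largest entry of $u(p)$ is $<D$, so $\phi_\sigma(u(p)) < D \le \phi_\sigma(u(w))$.

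Finally I will lift this to the protocol. At the voting round $P^1 = V \ni w$, and $w$ is the unique supported proposal of maximal positive score, so it wins without any appeal to tie-breaking. In every later round, the improvement condition requires a new public proposal to score strictly above the previous winning score $\phi_\sigma(u(w))$. By the second step no point of $X$ does, so no admissible public proposal can alter the winner. Thus $w$ wins two consecutive rounds, the epoch quiesces, and $w$ is adopted. I expect the main obstacle to be only the careful handling of the strict versus non-strict inequalities in the counting step. That step is what makes $w$ the strict global maximiser rather than merely a maximiser.
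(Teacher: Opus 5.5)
Your proof is correct and matches the paper's argument: members at $w$ attain utility $d(w,s)$ on $w$ and strictly less on any $p \neq w$, and the counting inequality $\lceil \sigma n\rceil > n/2$ then makes $w$ the strict maximiser of $\phi_\sigma(u(\cdot))$ over all of $X$. Your explicit lift to two-round quiescence goes slightly beyond the paper, which stops at per-round uniqueness; strict global maximality together with $w \in V \subseteq P^r$ already makes $w$ win every round, so you do not need the improvement condition.
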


\mypara{Per-Round Strategic Behaviour}\label{subsection:per-round-sp} We turn to strategic incentives. We first characterise per-round behaviour, then prove the protocol-level result (Section~\ref{subsection:protocol-sp}).

\begin{restatable}[1D dominance, generalised median]{proposition}{propdominance}
\label{proposition:1d-dominance}
In one-dimensional settings with $X \subseteq \mathbb{R}$ and $d(x, y) = |x - y|$, sincere voting is a weakly dominant strategy under $\phi_\sigma$ at any single round, regardless of $\sigma \in [1/2, 1)$ and regardless of public proposals submitted by other members.
\end{restatable}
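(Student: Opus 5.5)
Fix a round $r$, a member $i$ with true ideal point $v_i\in\mathbb{R}$, and arbitrary votes $V_{-i}$ and public proposals of the others. I will show that for every misreport $v_i'$ the round outcome under sincere $v_i$ is weakly preferred by $i$ to the outcome under $v_i'$. Here ``weakly preferred'' is measured by $d(v_i,\cdot)$, with the status quo $s$ as the fallback when no proposal is adopted.

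The plan is to reduce everything to the one-dimensional order structure of utilities. In 1D, $u(q,p)=|q-s|-|q-p|$, so for a fixed proposal $p$ the utility $q\mapsto u(q,p)$ is a unimodal ``tent'' type function of $q$. For a fixed $p$, the vector $u(p)$ depends on $i$'s report only through the single entry $u(v_i,p)$. Hence $\phi_\sigma(u(p))$, the $\lceil\sigma n\rceil$-th largest entry, is a clamped version of that entry. Let $a_p\le b_p$ denote the $(\lceil\sigma n\rceil-1)$-th and $\lceil\sigma n\rceil$-th largest entries among the other $n-1$ members. Then $\phi_\sigma(u(p))=\mathrm{median}(a_p',\,u(v_i,p),\,b_p')$ for the appropriate order statistics of the others' entries. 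Misreporting therefore moves each proposal's score only within a fixed interval determined by the others.

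The steps are as follows. (1) Write the score of every proposal in $P^r$ as this clamp. Note that $i$'s report also changes $P^r$ itself, since $v_i$ is a personal proposal. (2) Compare sincere and insincere outcomes. Let $w$ be the winner under the misreport, or $s$. The cases to check are: (a) $w$ is a proposal $i$ weakly prefers to the sincere winner, in which case there is nothing to show; (b) $w$ is worse for $i$. In case (b), I argue that the misreport can only have helped $w$ by raising $u(v_i',w)$ above $u(v_i,w)$, and/or lowered the sincere winner $p^*$ by decreasing $i$'s entry for $p^*$ or by removing $v_i$ from $P^r$. For $i$ to gain, a proposal $i$ likes more must be promoted. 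Raising $i$'s entry on a proposal $i$ truly dislikes relative to $p^*$ can only hurt $i$. So any manipulation that changes the outcome changes it toward a proposal $i$ ranks no higher. (3) Handle the loss of $v_i$ as a personal proposal. Replacing $v_i$ by $v_i'$ removes $i$'s favourite point from the candidates and adds $v_i'$, which $i$ values less. This can only remove options $i$ likes. (4) Treat the status-quo/no-winner case via positivity of the clamped score, using the support equivalence $\phi_\sigma>0\iff$ supported.

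The main obstacle is step (2). A misreport simultaneously changes $i$'s entries for all proposals, and in principle $i$ could lower a rival $p^*$ that $i$ dislikes in favour of a $w$ that $i$ likes. I must show this never beats sincerity, which is exactly the Moulin-style monotonic structure of 1D. My approach is to use the single-peaked shape. If $w$ is truly better for $i$ than $p^*$, then sincerely $u(v_i,w)>u(v_i,p^*)$, so $i$'s sincere entry already favours $w$ maximally relative to $p^*$. I will then try to show that the sincere report maximises $u(v_i,w)-u(v_i,p^*)$ over all reports $q$ whenever $w$ is closer to $v_i$ than $p^*$. This is a 1D triangle-inequality computation: the tent function difference is maximised for $q$ on $w$'s side beyond $w$, and the sincere $v_i$ needs checking. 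If this fails, I would fall back to weak dominance in its literal sense: exhibit, for each misreport, a profile of others where sincerity is strictly better, and show that no profile makes the misreport strictly better.
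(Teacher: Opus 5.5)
Your proposal has a genuine gap: nothing in it uses one-dimensionality in a way that could rule out manipulation. Your clamp description of $\phi_\sigma(u(p))$ is valid in every metric space: member $i$'s own entry confined between two order statistics of the others' entries. The promote/demote reasoning of step~(2) is equally dimension-free. Sincere voting is not dominant in $\mathbb{R}^2$ (Proposition~\ref{proposition:multidim-not-sp}), so no argument built only from these ingredients can succeed.

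The individual steps also fail:
\begin{itemize}
\item Step~(3) is an invalid inference. Replacing $v_i$ by $v_i'$ \emph{adds} the option $v_i'$, which $i$ values less than $v_i$ but possibly more than the sincere winner. This is exactly the $\mathbb{R}^2$ manipulation: the sincere outcome is $s$, and the misreport $(0.5,0.5)$ itself wins.
\item The inequality you hope to prove in step~(2) is false. Since $u(q,w)-u(q,p^*)=|q-p^*|-|q-w|$, it is maximised by reports at or beyond $w$, not by $v_i$. With $p^*=0$, $w=2$, $v_i=1.5$, the sincere value is $1$, while $q=3$ gives $2$. Even if it held, it would control only $i$'s own entries, not the resulting order statistics or the change in $P^r$.
\item The fallback is circular: showing that no profile makes a misreport strictly better is the proposition itself.
\end{itemize}

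The missing idea is the correct shape of $q\mapsto u(q,p)=|q-s|-|q-p|$. It is not a tent but a clipped ramp, monotone in $q$: non-decreasing for $p>s$, non-increasing for $p<s$. Hence ranking utilities is ranking votes. With $k=\lceil\sigma n\rceil$, one gets $\phi_\sigma(u(p))=|v_m-s|-|v_m-p|$ for a positional vote $v_m$ that depends on $p$ only through its side of $s$: $v_{(n-k+1)}$ for $p>s$, $v_{(k)}$ for $p<s$. That score is at most $|v_m-s|$, with equality only at $p=v_m\in V$. So the winner is a positional vote (or $s$), and public proposals never displace it. Dominance then reduces to the fact that one member can move an order statistic only weakly away from his own ideal; this is the paper's argument.

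Carrying it out carefully also reveals a hypothesis the statement needs. When $k>n/2$, one has $v_{(n-k+1)}\le v_{(k)}$, the outcome is $\mathrm{med}(v_{(n-k+1)},s,v_{(k)})$, and this is strategy-proof. For even $n$ and $\sigma=1/2$, however, both sides of $s$ can carry supported peaks, and manipulation is possible. Take $n=2$, $s=0$, $v_2=2$, $v_1^*=-1$. Sincere voting elects $2$ by scores $2$ to $1$, giving member~1 utility $-2$. Reporting $-3$ elects $-3$ by scores $3$ to $2$, giving utility $-1$. The paper's sketch works with a single positional vote $v_m$ and ignores the status-quo outcome, so it glosses over this case as well.
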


A worked example of 1D dominance is in Appendix~\ref{appendix:1d-characterisation}.
In dimensions two and above, the result fails: a single misreporter can manipulate the generalised median over utilities.

\begin{restatable}[Multidimensional non-SP, generalised median]{proposition}{propmultidimnotsp}
\label{proposition:multidim-not-sp}
For $X = \mathbb{R}^2$ with Euclidean distance and $\phi_\sigma$ with $\sigma = 1/2$, the constitutional governance rule per round is not strategy-proof.
\end{restatable}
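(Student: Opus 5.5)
The plan is to give an explicit counterexample with $n = 3$, $\sigma = 1/2$ (so $\lceil \sigma n \rceil = 2$ and $\phi_\sigma$ is the ordinary median), and no public proposals. The proposal set is then just the reported votes. The leverage a misreporter has is that their reported vote is itself a proposal (a personal proposal). So I want a profile in which sincere voting yields the status quo, while a single member can report a compromise point that a majority prefers to $s$ and that the member strictly prefers to $s$ under their \emph{true} ideal.

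Concretely, I take $s = (0,0)$ and place the three true ideals on the unit circle at $120^\circ$ apart:
\[ a = (1,0), \quad b = \bigl(-\tfrac12, \tfrac{\sqrt3}{2}\bigr), \quad c = \bigl(-\tfrac12, -\tfrac{\sqrt3}{2}\bigr). \]
Each ideal is at distance $1$ from $s$, and each pair of ideals is at distance $\sqrt3$. Under sincere voting I check support. For voter $a$ and proposal $b$, $u(a,b) = 1 - \sqrt3 < 0$, and similarly for every pair. So each vote is supported only by its own voter, i.e.\ by $1 < 2$ members. Hence $\phi_\sigma \le 0$ for every proposal in $P = V$, the round has no winner, and the outcome is $s$.

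Next I let voter~$1$ (true ideal $a$) misreport $p = r(\cos 60^\circ, \sin 60^\circ)$ for some fixed $r \in (0,1)$, say $r = 1/2$. A direct computation gives
\[ |a - p|^2 = 1 - r + r^2 < 1, \]
and by symmetry of $p$ with respect to $a$ and $b$ also $|b - p|^2 = 1 - r + r^2 < 1$.

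With the new profile, voter $b$'s reported vote equals its true ideal, and $u(b,p) > 0$. Voter~$1$'s reported utility for $p$ is $u(p,p) = |p| > 0$. So $p$ has two positive utilities and $\phi_\sigma(u(p)) > 0$.

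I also check that $b$ and $c$ remain unsupported under the new reported vote $p$:
\[ u(p,b) = r - |p-b|, \qquad u(p,c) = r - |p-c|. \]
Here $|p-b| = \sqrt{1-r+r^2} > r$ for $r < 1$, and $|p-c| > 1 > r$, since $p$ lies on the opposite side of the origin from $c$. So each of $b$ and $c$ still has only one supporter. Therefore $p$ is the unique supported proposal and is adopted.

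Voter~$1$'s true utility for the outcome is $u(a,p) = 1 - \sqrt{1-r+r^2} > 0$, strictly better than the sincere outcome $s$, which has utility $0$. This exhibits a profitable unilateral misreport, so the per-round rule is not strategy-proof.

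The main point needing care is bookkeeping rather than difficulty. Utilities in the score are computed from \emph{reported} votes, so the misreporter's own entry changes. I therefore have to verify support for every proposal in the manipulated profile, not just for $p$, to rule out some other proposal overtaking $p$ via the altered utility vector.
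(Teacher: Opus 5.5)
Your counterexample is correct and follows the same route as the paper. Both use an $n=3$ profile in which sincere voting leaves every peak with a single supporter, so $s$ is retained. Member~1 then misreports a compromise point between their ideal and a second member's ideal; that point becomes the unique supported personal proposal and gives member~1 positive true utility. The paper uses ideals $(1,0),(0,1),(-1,-1)$ with misreport $(0.5,0.5)$ rather than your symmetric $120^\circ$ configuration. Your explicit check that $b$ and $c$ remain unsupported after the misreport (e.g.\ $u(p,c)=r-(1+r)=-1$) fills in a step the paper only asserts.
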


The per-round structure of manipulation is characterised by three lemmas in Appendix~\ref{appendix:per-round-anatomy}; the resulting dichotomy motivates the protocol design of Section~\ref{subsection:protocol-sp}.

\mypara{Strategy-Proofness Under the Proposal Protocol}\label{subsection:protocol-sp} The proposal protocol of Section~\ref{subsection:protocol} seals votes at the start of the epoch and fixes them throughout, denying the misreporter both the information needed to construct an informed misreport and the option to revise mid-epoch. Under the protocol (and with no information leakage), no misreport weakly dominates sincere voting at majority threshold: for every misreport there is at least one profile of other members' votes under which sincere strictly outperforms it.

\begin{restatable}[Ex-post separability of sincere voting under the protocol]{theorem}{thmprotocolsp}
\label{theorem:protocol-sp}
For $\sigma = 1/2$ and $\phi = \phi_\sigma$, under the proposal protocol, no misreport weakly dominates sincere voting: for every member~$i$, every true ideal $v_i^* \neq s$, and every misreport $\tilde v_i \neq v_i^*$, there exists an epoch (a profile of the other members' votes and an admissible run) in which member~$i$'s true utility for the outcome is strictly higher under sincere voting than under the misreport.
\end{restatable}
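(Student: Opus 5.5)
We argue by explicit construction: for each member $i$, true ideal $v_i^* \neq s$, and misreport $\tilde v_i \neq v_i^*$, we build an epoch that sincere voting wins strictly. The construction has to work in an arbitrary metric space $(X,d)$, so we rely only on the points $s$, $v_i^*$, $\tilde v_i$ and on the freedom to choose the other members' votes. Since only existence is claimed, we may put every other member at a single point and keep the run trivial: no public proposals are ever submitted, which is always admissible. The outcome is then decided by the personal proposals alone via Definition~\ref{definition:rule}, and quiescence occurs after two rounds with the same winner.

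\emph{Case 1: $d(\tilde v_i, s) \le d(v_i^*, s)$ or, more generally, $\tilde v_i$ is not strictly better than $s$ for the true ideal.}
\begin{itemize}
\item Choose $n$ odd, say $n=3$. Let the other two members vote $v_i^*$, so a majority holds $v_i^*$.
\item Under sincere voting all three votes are $v_i^*$. By Proposition~\ref{proposition:median-majoritarian}, using $n$ odd, the outcome is $v_i^*$. Its true utility is $d(v_i^*,s)>0$, the maximum possible.
\item Under the misreport, at least two members still vote $v_i^*$, which is again a majority. Proposition~\ref{proposition:median-majoritarian} then forces the outcome $v_i^*$ here too, so this profile does not separate the two strategies.
\end{itemize}
We therefore need profiles where member $i$ is pivotal. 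Take $n=3$ with the other two members at $v_i^*$ and at a point $w$, to be chosen. With the sincere vote, two of three members hold $v_i^*$, so the rule adopts $v_i^*$ by majoritarity, and member $i$ receives $d(v_i^*,s)$. With the misreport, the proposal set is $\{v_i^*,\tilde v_i,w\}$. Member $i$ now gets at most $d(v_i^*,s)$, with equality only if $v_i^*$ wins. So it suffices to choose $w$ so that $v_i^*$ does not win.

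The simplest choice is $w=\tilde v_i$. Then two members hold $\tilde v_i$, and provided $\tilde v_i$ is preferred to $s$ by those voters, which holds whenever $\tilde v_i \neq s$, majoritarity adopts $\tilde v_i$. Member $i$'s true utility is then $d(v_i^*,s)-d(v_i^*,\tilde v_i) < d(v_i^*,s)$ because $\tilde v_i\neq v_i^*$. This gives strict separation with a clean profile: one other voter at $v_i^*$ and one at $\tilde v_i$.

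\emph{Case 2: $\tilde v_i = s$.} A vote of $s$ is never supported by its own voter, so the construction above breaks. Instead take $n=3$ with the other two members at $v_i^*$ and at $s$.
\begin{itemize}
\item Sincere voting: majoritarity adopts $v_i^*$, giving positive utility.
\item Misreport: the votes are $v_i^*, s, s$. Any proposal has positive utility for at most one member, since the two members at $s$ have zero utility everywhere. No proposal reaches $\phi_{1/2}>0$, so the status quo is retained and member $i$'s utility is $0$.
\end{itemize}
This is strictly worse, as required.

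\emph{Remaining checks.} We must verify that the empty-public-proposal run is admissible and terminates by two-round quiescence with the stated winners. We must also handle ties, since majoritarity as stated already fixes the outcome. The main obstacle we expect is that the theorem presumably intends the "other members" to be drawn from an arbitrary $n$. If $n$ is fixed and even, we would place $\lceil n/2\rceil$ others at the relevant point to retain a strict majority. The $n=2$ edge case needs a separate check under the "at least half" semantics, possibly using a public proposal to break the tie.
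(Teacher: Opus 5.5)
Your construction is the paper's, specialised to $n=3$. The paper sets $k=\lceil n/2\rceil$ and places $k-1$ other members at $v_i^*$, $k-1$ at $\tilde v_i$, and the remaining $n-2k+1\in\{0,1\}$ at $s$, with no public proposals. Member $i$'s vote then decides which of $v_i^*$ and $\tilde v_i$ reaches the threshold. Your Case~2 profile is just your Case~1 profile with $w=\tilde v_i=s$, so one profile serves every misreport. Only the analysis of the misreport outcome splits on whether $\tilde v_i=s$, and the condition in your Case~1 heading plays no role.

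The gap is that you choose $n$ yourself. The framework fixes a community of $n$ members, and the theorem asks, for a member $i$ of that community, for a profile of the \emph{other} members' votes. So the argument must work for every $n$, and you leave exactly this as an unresolved ``obstacle''.

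For odd $n$ your argument extends directly: put $(n-1)/2$ others at each of $v_i^*$ and $\tilde v_i$ and apply Proposition~\ref{proposition:median-majoritarian}. For even $n$ it does not. That proposition explicitly excludes $\sigma=1/2$ with even $n$, and your suggested patch fails. Placing $\lceil n/2\rceil=n/2$ others at one point lets that point meet the gate $n/2$ whatever $i$ reports, so $i$ no longer decides which point passes. The outcome then hinges on comparing $d(v_i^*,s)$ with $d(\tilde v_i,s)$, and on the fixed tie-break when they are equal.

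The missing ingredient is the filler voter at $s$, together with a direct computation of the $n/2$-th largest utility in place of majoritarity. Take $n/2-1$ others at each of $v_i^*$ and $\tilde v_i$ and one at $s$. Under sincere voting, $v_i^*$ scores $d(v_i^*,s)$. Meanwhile $\tilde v_i$ scores at most $d(v_i^*,s)-d(v_i^*,\tilde v_i)<d(v_i^*,s)$: only the $n/2-1$ entries at $\tilde v_i$ can exceed this value, since the triangle inequality bounds the filler's entry $-d(\tilde v_i,s)$ below it. The misreport case is symmetric, so $\tilde v_i$ wins and $i$ gets $d(v_i^*,s)-d(v_i^*,\tilde v_i)$. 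The voter at $s$ has non-positive utility for every proposal, so it never helps anything through the gate. The same profile also disposes of $\tilde v_i=s$, since then at most $n/2-1$ members are positive about any proposal. It likewise handles $n=2$, where the other member votes $s$, with no public proposal needed.
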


\begin{restatable}[1D dominance under the protocol]{corollary}{coronedprotocolsp}
\label{corollary:1d-protocol-sp}
In one-dimensional settings, sincere voting is a weakly dominant strategy under $\phi_\sigma$ for any $\sigma \in [1/2, 1)$ under the proposal protocol.
\end{restatable}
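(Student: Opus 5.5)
We lift the per-round dominance of Proposition~\ref{proposition:1d-dominance} to the whole epoch by tracking the epoch outcome, not a single round. Fix member~$i$ with true ideal $v_i^*$, a misreport $\tilde v_i$, the other members' votes $V_{-i}$, and the other members' public-proposal strategies. Sealed votes are fixed for the epoch, so under either report the epoch outcome is some point of $X$: either the two-round quiescent winner or the status quo $s$. We must show that the outcome under sincere voting is at least as close to $v_i^*$ as the outcome under $\tilde v_i$.

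\textbf{Step 1: order structure in 1D.} Under $d(x,y)=|x-y|$ each member's utility $u(q,p)=|q-s|-|q-p|$ is single-peaked in $p$. The score $\phi_\sigma(u(p))$ is the $\lceil\sigma n\rceil$-th largest utility. First we record the characterisation: $\phi_\sigma(u(p))>0$ iff at least $\lceil\sigma n\rceil$ voters lie strictly on $p$'s side of $s$ and strictly closer to $p$ than to $s$. Next, among points on one side of $s$, $\phi_\sigma$ increases as $p$ moves toward the appropriate order statistic $m$ of the votes on that side, then decreases. This is the content of Proposition~\ref{proposition:1d-characterisation} and of the proof of Proposition~\ref{proposition:1d-dominance}.

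\textbf{Step 2: the protocol outcome is pinned by the order statistic.} The improvement condition makes winning scores strictly increasing across rounds. Finite termination (Proposition~\ref{proposition:termination}) therefore gives a well-defined final outcome~$o$. Because scores are unimodal with peak at $m$, $o$ lies in the interval between $s$ and $m$ (or equals $s$). We would then show a monotone comparative static. Moving $i$'s report away from $v_i^*$ can only move $m$ weakly away from $v_i^*$ or leave it unchanged. It can also only shrink or shift the set of admissible public proposals in $i$'s disfavour, since proposer-preference and support are evaluated under the reported votes.

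\textbf{Step 3: coupling the two runs.} We couple the sincere and misreported epochs by letting $i$ play the same public-proposal moves in both, adjusted for admissibility. Under sincere voting, $i$ may also submit a proposal that improves toward its own ideal whenever one is admissible. We then argue that the sincere run's final outcome is weakly closer to $v_i^*$.

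The main obstacle is Step 3. The other members' public proposals may depend on the revealed votes, and $\varepsilon$-novelty and path dependence could in principle let a misreport steer the run to a different quiescent point. We would handle this by the following argument. In 1D, whatever sequence of admissible proposals occurs, the final winner is at least as good for $i$ as the best admissible point between $s$ and $m$ that someone submits. Member~$i$, voting sincerely, can always submit such a point (or one within~$\varepsilon$ of it) on its own side. A misreport only moves $m$ away from $v_i^*$ and cannot create admissible points closer to $v_i^*$. The corollary would follow from this, adapting Proposition~\ref{proposition:1d-dominance} round by round up to the $\varepsilon$ granularity. If exact weak dominance fails because of $\varepsilon$ effects, we would interpret ``weakly dominant'' modulo the novelty distance, or argue that $i$'s sincere vote itself supplies the needed point.
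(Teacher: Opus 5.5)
\textbf{The gap is in Step~3.} That step is where the proof has to happen, and it is not carried out. The claim that the final winner is at least as good for $i$ as ``the best admissible point between $s$ and $m$ that someone submits'' is asserted, not shown. Your fallback of proving dominance ``modulo the novelty distance'' would not establish the corollary as stated.

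The idea you are missing makes the coupling unnecessary. By Proposition~\ref{proposition:1d-characterisation}, the score is $|a-s|-|a-p|$ with $a=v_{(n-k+1)}$ for $p>s$. For $p<s$ it is $|b-s|-|b-p|$ with $b=v_{(k)}$. These are order statistics of the whole profile, not of the votes on $p$'s side. Hence whenever some point of $X$ has positive score, the maximum over all of $X$ is attained only at $a$ or $b$. These are \emph{votes}, so they lie in $P^r$ in every round. Likewise, if no vote is supported, no point of $X$ has positive score. The first-round winner's score is therefore the global maximum, and the improvement condition rules out every public proposal thereafter. The epoch outcome is a function of the sealed votes alone: a positional vote or $s$. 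It is not, as your Step~2 suggests, a run-dependent point between $s$ and $m$. Other members' vote-dependent proposal strategies, $\varepsilon$-novelty and path dependence are then irrelevant. Dominance under the protocol is literally the vote-level dominance of Proposition~\ref{proposition:1d-dominance}, which is the paper's three-line argument. You had the ingredient (``scores are unimodal with peak at $m$'') but did not use that $m$ itself is always among the proposals.

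\textbf{A caveat on the vote-level step, whichever route you take.} This concerns your Step~2 and Proposition~\ref{proposition:1d-dominance}, on which the paper relies. Tracking a single positional vote presumes that at most one side of $s$ can carry positive scores. That is guaranteed when $n-k+1\le k$ (so $a\le b$), i.e.\ when $\sigma>1/2$ or $n$ is odd. For even $n$ and $\sigma=1/2$ one can have $b<s<a$ with both supported, and the winner is whichever is farther from $s$. Weak dominance then fails. Take $n=2$, $s=0$, $v_1^*=1$, $v_2=-2$. Sincere voting yields outcome $-2$, giving member~1 true utility $-2$. Reporting $\tilde v_1=3$ yields outcome $3$, giving true utility $-1$. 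No public proposal can alter either outcome. A correct proof should therefore restrict to $\sigma>1/2$ or odd $n$, as in Proposition~\ref{proposition:median-majoritarian}.
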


The case $\sigma > 1/2$ in general metric spaces is open.

\mypara{Public-Proposal and Coalition Behaviour}\label{subsection:compromise-behaviour} Public-proposal and coalition-channel behaviour is characterised in Appendix~\ref{appendix:coalition}.

\section{Decision Types}
\label{section:metrics}

Each amendable component has its own metric space. We adopt the metric-space instantiations of Bulteau et al.~\cite{bulteau2021aggregation} for the seven canonical settings of cooperative governance: electing officers, setting rates, allocating a budget, ranking priorities, selecting a board, drafting bylaws, and amending the constitution.

\mypara{Overview}\label{subsection:overview} Table~\ref{table:applications} summarises the seven settings: the metric space and distance function, the per-round complexity of the constitutional governance rule, and the existence and complexity of Bulteau et al.'s $L_p$ aggregation---selecting $x \in X$ minimising $\sum_i d(x, v_i)^p$---in the same setting. The constitutional governance rule is polynomial-time in every setting, including those for which $L_p$ aggregation is NP-hard, because the rule scores each proposal in $P^r$ rather than searching the entire metric space.

\begin{table}[t]
\caption{The seven application settings: metric space, per-round complexity of the constitutional governance rule, and Bulteau et al.'s $L_p$ aggregation. $n$~=~members, $m$~=~alternatives, $|A|$~=~candidates, $\ell$~=~text length.}
\label{table:applications}
\centering
\small
\begin{tabular}{@{}llll@{}}
\toprule
Setting & Metric space & Per round & $L_p$~\cite{bulteau2021aggregation} \\
\midrule
Plurality (officers) & Discrete on $A \cup \{\bot\}$ & $O(n^2)$ & linear \\
1D elections (rates) & $X \subseteq \mathbb{R}$, $|x-y|$ & $O(n^2)$ & linear \\
Simplex (budgeting) & $\Delta^{m-1}$, Euclidean & $O(n^2 m)$ & efficient \\
SWF (rankings) & Permutations $\mathbb{S}_A$, swap distance & $O(n^2 m \log m)$ & NP-hard \\
Committee (boards) & Subsets $2^A$, symmetric difference & $O(n^2 |A|)$ & NP-hard ($p > 1$) \\
Legislation (bylaws) & Strings $\Sigma^*$, weighted Levenshtein & $O(n^2 \ell^2)$ & NP-hard \\
Amendments & Per component (see \S\ref{subsection:self-amendment}) & polynomial & --- \\
\bottomrule
\end{tabular}
\end{table}

We work two settings in detail in the main text---1D rate-setting and simplex budgeting, the latter being the running example of Example~\ref{subsection:running-example}. The other five settings receive compressed treatment below, with full details in Appendix~\ref{appendix:settings}.

\mypara{Worked Setting: Setting a Rate}\label{subsection:rate} A community must set various parameters that take values on a one-dimensional scale: a commission rate, a membership fee, a minimum hourly wage, a fraction of revenue allocated to reserves. Each member proposes a value; the current value is the status quo. \mysubpara{Model} The metric space is $X \subseteq \mathbb{R}$ with $d(x, y) = |x - y|$. The status quo $s \in X$ is the current value. Each member~$i$ specifies an ideal value $v_i \in X$. The constitutional threshold $\sigma$ is constitutionally specified per parameter. The 1D structure admits a sharp characterisation of the generalised median's per-round outcome; see Appendix~\ref{appendix:1d-characterisation}. Proposition~\ref{proposition:1d-characterisation} drives the per-round strategy-proofness result of Proposition~\ref{proposition:1d-dominance}: in 1D, the winner depends only on the positional voter $v_m$, which a single misreport can move only weakly away from the misreporter's ideal element.

\begin{example}[Setting a commission rate]
\label{example:rate}
A cooperative of five members sets the commission rate at $\sigma = 1/2$. The current rate is $s = 20\%$. Members propose $v_1 = 10\%, v_2 = 15\%, v_3 = 18\%, v_4 = 22\%, v_5 = 25\%$; the median is $v_m = v_3 = 18\%$. 
The median utility of each proposal:
\begin{center}
\begin{tabular}{@{}lccccc@{}}
\toprule
$p$ & 10\% & 15\% & 18\% & 22\% & 25\% \\
$\phi_{1/2}(u(p))$ & $-6$ & $-1$ & $+2$ & $-2$ & $-5$ \\
\bottomrule
\end{tabular}
\end{center}
Only $p = 18\%$ is supported, and it has positive median utility, so the rate is amended to $18\%$.
\end{example}

\mypara{Worked Setting: Allocating a Budget}\label{subsection:budgeting} A community must allocate its budget across categories: marketing, infrastructure, member benefits, a reserve fund. Each member proposes a distribution; the current allocation is the status quo. This is the running example of Example~\ref{subsection:running-example}. \mysubpara{Model} Let $A$ be the set of categories, $m = |A|$. The metric space is the simplex $\Delta^{m-1} = \{(w^1, \ldots, w^m) \in \mathbb{R}_{\ge 0}^m : \sum_i w^i = 1\}$ with the Euclidean metric $d(x, y) = \|x - y\|_2$. The status quo $s \in \Delta^{m-1}$ is the current allocation. Each member~$i$ specifies an ideal allocation $v_i \in \Delta^{m-1}$. \mysubpara{Round complexity} For each $p \in P^r$, computing the utility vector takes $O(n m)$ time, evaluating $\phi_\sigma$ takes $O(n)$ time. Over $|P^r| \le 2n$ proposals, the per-round cost is $O(n^2 m)$. \mysubpara{Comparison with $L_1$ over the simplex} Bulteau et al.'s $L_1$ aggregation over the simplex returns the geometric median of the votes, which is computable in $\tilde O(nm)$ time. The geometric median is in general \emph{not} an element of $V$, but the constitutional governance rule restricts to $P^r$. Under the running example, the geometric median was admitted as a public proposal (Example~\ref{subsection:running-example}) and won the round; this illustrates the open public-proposal channel admitting the optimization-based proposal alongside the peaks. The rule scores it under $\phi_\sigma$, not under the rule used to compute it. \mysubpara{Experimental support for median rules on the simplex} Puppe and Rollmann~\cite{puppe2021mean} compared mean and median voting rules on the simplex in a laboratory experiment, finding that voters vote more sincerely under the median rule and more strategically under the mean.

\mypara{Plurality (electing officers)}\label{subsection:other-settings} The five remaining settings receive compact treatment here, with full details in Appendix~\ref{appendix:settings} (Appendix~\ref{appendix:amendment} for amendments). The metric space is $X = A \cup \{\bot\}$ where $A$ is the candidate set and $\bot$ denotes the vacant position; $d$ is the discrete metric. With $s = \bot$ as the status quo, member utilities are binary: $1$ if the proposal matches the member's preferred candidate, $0$ otherwise. The rule selects the candidate with the most supporters, provided that candidate has $\sigma$-supermajority support; otherwise the position remains vacant.

\mypara{Social welfare functions (ranking priorities)} The metric space is $X = \mathbb{S}_A$, the set of permutations over $A$, with the swap distance (number of adjacent transpositions). Bulteau et al.'s $L_p$ aggregation here corresponds to the Kemeny ranking, which is NP-hard. The constitutional governance rule selects the supported ranking with maximal score in $O(n^2 m \log m)$ time, trading optimality over the entire space for tractability over the proposal set.

\mypara{Committee elections (selecting a board)} The metric space is $X \subseteq 2^A$ (subsets of eligible members), with the symmetric difference $d(x, y) = |x \triangle y|$. Bulteau et al.'s Condorcet aggregation is coNP-hard and $L_p$ for $p > 1$ is NP-hard. The constitutional governance rule selects the supported board with maximal score in $O(n^2 |A|)$ time. With $s = \emptyset$ as the status quo (no board), member utilities measure how many of their preferred members the proposed board includes.

\mypara{Legislation (drafting bylaws)} A community must draft and amend the bylaws governing membership, decision-making, and dispute resolution. This is the most challenging setting in Bulteau et al.~\cite{bulteau2021aggregation}: $L_p$ aggregation is NP-hard for all $p \ge 1$. The constitutional governance rule applies regardless: each member submits a text $v_i \in \Sigma^*$, the rule scores each proposal under the metric, and the supported text with maximal score is adopted in $O(n^2 \ell^2)$ time, where $\ell$ is the length of the longest proposed text.

\mypara{Constitutional amendments} Each amendable component carries its own metric space, aggregation rule, threshold $\sigma_i$, and novelty distance, and is amended by a $\sigma_i$-supermajority via Definition~\ref{definition:rule}. The constitutional threshold $\sigma_C$ amending itself is the one self-referential case, and it is amended by the $h$-rule of~\cite{abramowitz2021beginning}. Section~\ref{subsection:self-amendment} states this; full details are in Appendix~\ref{appendix:amendment}.

\begin{remark}[Semantic metrics]\label{remark:semantic-metrics}
A note on the metric for legislation. Bulteau et al.\ adopt a weighted Levenshtein distance, a syntactic measure that counts character-level edit operations without regard to meaning. However, two bylaws differing by a synonym substitution are syntactically distant but semantically identical; two differing by a single negation are syntactically close but semantically opposite. Thus, what governance requires is a \emph{semantic} distance between texts. The constitutional governance rule is metric-agnostic: any distance function over $\Sigma^*$ plugs into the framework with no change to the rule, the protocol, or any framework-level guarantee. AI mediation~\cite{briman2025ai} and large language models can compute such distances, which the framework can use directly. We retain the syntactic Levenshtein distance here for comparability with Bulteau et al.; the design of governance-appropriate semantic distances is an active, orthogonal research direction.
\end{remark}

\section{The Compromise Gap: An Empirical Study}
\label{section:gap}

The constitutional governance rule selects an outcome from the proposal set $P^r$ rather than from the full metric space $X$. This trades the unconstrained optimum for a polynomial-time process a community can actually run. We quantify the trade-off via the \emph{compromise gap}: the difference between the best peak and the unconstrained optimum. The compromise gap is what public proposals \emph{close}: any public proposal from any source---a member's intuition, a coalition's deliberation, an optimization algorithm, or an AI mediator---narrows the gap when it scores higher than the best peak.

\mypara{Definition and Scope}\label{subsection:gap-definition} Fix a metric space $(X, d)$, a vote profile $V = \{v_1, \ldots, v_n\}$, a status quo $s$, an aggregation rule $\phi$, and a threshold $\sigma$. Let
\[
\text{OPT}_\phi(V, s) := \sup_{x \in X} \phi(u(x)), \qquad \text{PEAK}_\phi(V, s) := \max_{p \in V} \phi(u(p))
\]
be the unconstrained optimum and the best peak. The \emph{compromise gap} is the non-negative quantity
\[
\text{CG}_\phi(V, s) := \text{OPT}_\phi(V, s) - \text{PEAK}_\phi(V, s) \ge 0 .
\]
We restrict attention to profiles with $\text{OPT}_\phi(V, s) > 0$, since otherwise the status quo is retained under both the unconstrained and the proposal-restricted rules and the comparison is vacuous. A public proposal $c \in X$ \emph{closes the gap} (in part or whole) when $\phi(u(c)) > \text{PEAK}_\phi(V, s)$. The gap is fully closed if the proposal set contains an element attaining $\text{OPT}_\phi$. The 1D result below is specific to the generalised median; the Lipschitz bound applies to any 1-Lipschitz $\phi$ and any $(X, d)$; the simulations use the generalised median.

\mypara{Zero Gap in One Dimension}\label{subsection:zero-gap} The first result is a clean positive: in one dimension with the generalised median, peaks suffice---the gap is zero.

\begin{restatable}[Zero gap in 1D]{theorem}{thmzerogap}
\label{theorem:zero-gap-1d}
Let $X \subseteq \mathbb{R}$ with $d(x, y) = |x - y|$, let $n$ be odd, and let $\sigma = 1/2$. Then $\text{CG}_{\phi_\sigma}(V, s) = 0$ for every $V \subseteq X$ and every $s \in X$.
\end{restatable}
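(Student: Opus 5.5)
The plan is to show directly that the median-utility function $g(x) := \phi_{1/2}(u(x))$, with $k := (n+1)/2$, attains its supremum over $X$ at some peak $v_j \in V$. The natural candidate is the median peak $v_m$ (the $k$-th order statistic of $V$), which Proposition~\ref{proposition:1d-characterisation} already identifies as the per-round winner among the peaks. Since $\text{PEAK} \le \text{OPT}$ is trivial, it suffices to prove $g(x) \le \text{PEAK}_{\phi_\sigma}(V,s)$ for every $x \in X$.

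\textbf{Step 1: reformulate superlevel sets.} I will rewrite superlevel sets of $g$ as interval-counting statements. For a level $t$, let $r_i := |v_i - s| - t$. Then $u(v_i, x) \ge t$ iff $x \in I_i(t) := [v_i - r_i,\, v_i + r_i]$, which is empty if $r_i<0$. Hence $g(x) \ge t$ iff $x$ lies in at least $k$ of the intervals $I_i(t)$. Only $t>0$ matters, since the theorem concerns profiles with $\text{OPT}>0$ and, for $t \le 0$, $v_m$ is itself in every nonempty $I_i$ containing $\dots$; I will in any case handle $t\le 0$ separately. For $t>0$, note the extra structure that $s \notin I_i(t)$ for every $i$, because $|v_i - s| = r_i + t > r_i$.

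\textbf{Step 2: compare an arbitrary $x$ with the peaks.} By reflection symmetry I assume $x > v_m$. Let $L = \{i : v_i \le v_m\}$, so $|L| \ge k$, and let $R$ be its complement, so $|R| \le k-1$. Every interval $I_i(t)$ with $i \in L$ that contains $x$ also contains $v_m$, because its centre lies at or left of $v_m$, which lies left of $x$. So if $x$ is covered by at least $k$ intervals, the only way $v_m$ can fail to be is that some intervals centred in $R$ contain $x$ but not $v_m$. I then split on the position of $s$.

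If $s \le v_m$, an interval centred at $c > v_m$ that excludes $s$ and contains $x$ may still miss $v_m$. In that case I will instead compare $x$ with a different peak, namely the largest peak $v_j \le x$ or the smallest peak $v_j \ge x$. The idea is to slide $x$ to the nearest peak in the direction that does not lose coverage: the intervals containing $x$ form a family with the Helly property, so their common intersection is an interval $J \ni x$. I will argue that $J$ contains some peak $v_j$ whose interval is in the family, or else can be enlarged toward one.

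If $s > x$, the exclusion of $s$ forces each relevant interval to lie to the left of $s$. This should push the covering family toward $v_m$.

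\textbf{Step 3: use piecewise linearity for the general case.} As a fallback I will use the piecewise-linear structure of $g$. On any open interval between consecutive peaks, each $u(v_i,\cdot)$ is affine with slope $\pm1$, so $g$ is the $k$-th largest of two internally-ordered families, increasing ones $U$ and decreasing ones $D$. Thus $g = \max_j \min(U_{(j)}, D_{(k-j)})$, a maximum of ``tent'' functions, and an interior maximum can only occur at a crossing of some $u(v_p,\cdot)$ (with $p$ to the right) and $u(v_q,\cdot)$ (with $q$ to the left). At such a crossing I will use the triangle inequality, specifically $\bigl||v_q - s| - |v_p - s|\bigr| \le v_p - v_q$, together with oddness of $n$, which gives exactly one median rank. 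The goal is to show that the crossing value is dominated by $g$ at $v_p$ or at $v_q$, since moving to that endpoint keeps at least $k$ utilities at or above the crossing level.

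\textbf{Main obstacle.} The hard step is the exclusion of interior maxima in Step~2/3, where voters on the far side of the median gain while those on the near side lose. A pure pointwise comparison $g(x) \le g(v_m)$ is not obviously true in every configuration of $s$, so the crux is the counting argument that, at an interior crossing, the $k$ utilities that are at least $t$ can be transported to a peak without dropping below $t$. I would first try to prove this via Step~2 using the Helly/interval structure together with the fact that no interval contains $s$, and resort to the tent analysis of Step~3 only if that fails.
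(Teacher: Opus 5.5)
\textbf{There is a genuine gap.} As written, this is not yet a proof. Every load-bearing step is deferred (``I will argue that $J$ contains some peak'', ``this should push the covering family toward $v_m$'', ``the goal is to show that the crossing value is dominated''), and the case split in Step~2 omits $v_m < s \le x$.

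More importantly, you are missing the one observation that makes the theorem immediate. Its absence leads you to doubt a true statement: $g(x) \le g(v_m)$ holds for every $x$ and every position of $s$. Fix $x$ and view $u(v,x) = |v-s| - |v-x|$ as a function of the voter's ideal point $v$. If $x > s$, it equals $s-x$ for $v \le s$, $2v-s-x$ for $s \le v \le x$, and $x-s$ for $v \ge x$, so it is non-decreasing in $v$. If $x < s$ it is non-increasing, and if $x = s$ it vanishes. Hence the entries of $u(x)$ are sorted exactly as the votes are (or in reverse). The $k$-th largest entry, $k = (n+1)/2$, is therefore the utility of the $(n-k+1)$-th order statistic (resp.\ the $k$-th); for odd $n$ both are $v_m$. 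So $g(x) = |v_m - s| - |v_m - x|$ for \emph{every} $x \in X$, not only for peaks. Consequently $g(x) \le |v_m - s| = g(v_m) \le \text{PEAK}$, with no case analysis. This is Proposition~\ref{proposition:1d-characterisation} applied to arbitrary points, and it is the whole of the paper's proof.

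In the interval language of your Step~1, recording only $s \notin I_i(t)$ discards the structure you need. For $t > 0$, every nonempty $I_i(t)$ with $v_i > s$ equals $[s+t,\, 2v_i - s - t]$, and every nonempty one with $v_i < s$ equals $[2v_i - s + t,\, s - t]$. So the intervals on each side of $s$ share an endpoint and are nested, growing as the centre moves away from $s$. Consequently a point $x > s$ is covered at least $k$ times iff it lies in $I_m(t)$, which, being nonempty and centred at $v_m$, contains $v_m$.

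In particular, the configuration you worry about in Step~2 cannot arise. Suppose $s \le v_m < x$ and an interval centred right of $v_m$ contains $x$ but not $v_m$. Then $s + t > v_m$, which empties $I_m(t)$ and every interval centred in $L \cap [s,\infty)$, while intervals centred left of $s$ end at $s - t < x$. So at most $|R| \le k-1$ intervals cover $x$. Once this monotonicity (single-crossing) property is in hand, the detour through other peaks, the Helly argument, and the tent-function analysis of Step~3 are all unnecessary.
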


Theorem~\ref{theorem:zero-gap-1d} is the median voter theorem expressed in our framework: in 1D with odd electorate, the median peak is both the unconstrained optimum and a peak. The result extends to the constitutional scalar parameters of Appendix~\ref{appendix:scalar} (thresholds and novelty distances) under the same argument.

\mypara{General Bound}\label{subsection:lipschitz} In dimensions two and above, the gap can be positive; minimal examples are in Appendix~\ref{appendix:gap-examples}. The phenomenon is intrinsic to multidimensional settings: the unconstrained optimum need not coincide with any peak. The following bound holds in full generality.

\begin{restatable}[Lipschitz bound]{proposition}{proplipschitz}
\label{proposition:lipschitz}
If $\phi : \mathbb{R}^n \to \mathbb{R}$ is $1$-Lipschitz in the $\ell_\infty$ norm, then
\(
\text{CG}_\phi(V, s) \le \min_{p \in V} d(x^*, p)\), where \(x^* \in \arg\max_{x \in X} \phi(u(x))\).
\end{restatable}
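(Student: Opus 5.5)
The plan is to compare the utility vector of the optimum $x^*$ with the utility vector of the peak $p \in V$ closest to it, and then pass through the Lipschitz property of $\phi$. Fix $x^* \in \arg\max_{x\in X}\phi(u(x))$, so that $\text{OPT}_\phi(V,s) = \phi(u(x^*))$. If the supremum is not attained, the same argument runs with a near-optimal $x$ and a limiting step, but I will assume attainment, as the statement does. Let $p \in V$ be a minimiser of $d(x^*, p)$; this exists since $V$ is finite.

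The key step is a coordinatewise comparison of the two utility vectors. For each member $i$,
\[
u(v_i, x^*) - u(v_i, p) = \bigl(d(v_i,s) - d(v_i,x^*)\bigr) - \bigl(d(v_i,s) - d(v_i,p)\bigr) = d(v_i,p) - d(v_i,x^*) .
\]
The status-quo term cancels. By the triangle inequality, $|d(v_i,p) - d(v_i,x^*)| \le d(x^*,p)$, so
\[
\|u(x^*) - u(p)\|_\infty \le d(x^*,p).
\]

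Applying $1$-Lipschitzness of $\phi$ in $\ell_\infty$ gives $\phi(u(x^*)) - \phi(u(p)) \le d(x^*,p)$. Since $p \in V$, we have $\text{PEAK}_\phi(V,s) \ge \phi(u(p))$, and therefore
\[
\text{CG}_\phi(V,s) = \phi(u(x^*)) - \text{PEAK}_\phi(V,s) \le d(x^*,p) = \min_{q\in V} d(x^*,q).
\]

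There is no real obstacle here. The only point needing care is that the status-quo term $d(v_i,s)$ cancels, which is what makes the bound independent of $s$. As a remark, I would note that the generalised median $\phi_\sigma$ is $1$-Lipschitz in $\ell_\infty$, since order statistics are $1$-Lipschitz in sup norm. The bound therefore applies to the simulated rule.
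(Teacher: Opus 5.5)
Your proof is correct and follows the paper's argument. The status-quo terms cancel, the reverse triangle inequality gives $\|u(x^*) - u(p)\|_\infty \le d(x^*,p)$, and the $1$-Lipschitz hypothesis transfers this bound to $\phi$; your explicit use of $\text{PEAK}_\phi(V,s) \ge \phi(u(p))$ just spells out what the paper compresses into ``minimising over $p \in V$ yields the claim.''
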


Both the generalised median and the mean are $1$-Lipschitz in $\ell_\infty$. For the generalised median, this follows from the fact that the sort operator is $1$-Lipschitz in $\ell_\infty$, so the order statistic inherits the property. For the mean, $|\frac{1}{n}\sum_i (u_i - u'_i)| \le \frac{1}{n}\sum_i |u_i - u'_i| \le \|u - u'\|_\infty$. The bound is tight in Example~\ref{example:star}, where $d(h, V) = 1 = \text{CG}$.

\begin{remark}
Sharper bounds than Proposition~\ref{proposition:lipschitz} would exploit the structure of the metric space and the aggregation rule. Two directions are natural. First, instance-dependent bounds: in median spaces~\cite{nehring2007structure} (trees, hypercubes, products of lines), generalised medians have additional structure that may yield tighter analytic bounds. Second, smoothed-analysis bounds: under realistic priors on $V$ (e.g., uniform sampling, log-concave distributions), expected $\text{CG}$ may admit polynomial-of-$n$ rates rather than worst-case bounds. Both directions are open.
\end{remark}

We turn to algorithmic gap-closing.

\mypara{Closing the Gap: Pairwise Compromise}\label{subsection:heuristic} The framework's open public-proposal channel admits public proposals from any source. We study a simple, polynomial-time source---pairwise proposal combinations---and show empirically that it closes a substantial fraction of the gap in realistic configurations. The formal definition of Heuristic P and its complexity are in Appendix~\ref{appendix:heuristic-details}. Heuristic P formalises the simplest public-proposal move: blend two existing proposals into a candidate, score it under the framework's own rule. The gating step ensures P never recommends a public proposal strictly worse than the current best. The heuristic is proposal-native: input, output, and search space all live in the proposal set or its pairwise combinations. Heuristic P carries no worst-case approximation guarantee. Configurations in which $x^*$ lies far from every pairwise combination of existing proposals are not addressed by P. We turn to empirical evidence.

\mypara{Simulation: Closing the Gap in Practice}\label{subsection:simulation} We simulated the compromise gap and Heuristic P across four of the application settings: 2D Euclidean (a stylised common case), the simplex under Euclidean distance, the hypercube under symmetric difference (Appendix~\ref{appendix:committee}), and permutations under swap distance (Appendix~\ref{appendix:swf}). For each profile, we compute $\text{OPT}$ exactly (by enumeration in finite spaces, by grid search elsewhere), compute $\text{PEAK}$, and run Heuristic P. We report three statistics per configuration:
\begin{itemize}
    \item \emph{Positive-CG frequency}: the fraction of profiles with $\text{CG} > 0$, when $\text{OPT} > 0$.
    \item \emph{Gap-closing ratio}: the mean of $(\phi(u(c^*)) - \text{PEAK}) / \text{CG}$ over profiles with $\text{CG} > 0$ and Heuristic P returning $c^* \neq \bot$.
    \item \emph{Hit rate}: the fraction of profiles on which Heuristic P returns a non-$\bot$ output.
\end{itemize}
Profiles are generated by sampling peaks uniformly: in 2D Euclidean, peaks are drawn uniformly from a bounded region; on the simplex, from the uniform Dirichlet distribution; on the hypercube, uniformly from $2^A$; on permutations, uniformly from $\mathbb{S}_A$. Sample sizes: 200--500 profiles per configuration. Table~\ref{table:simulation} reports nine representative rows; the full sweep across all twenty-seven configurations is in Appendix~\ref{appendix:simulation-full}.

\begin{table}[h]
\caption{Compromise gap and Heuristic P across four settings. ``Gap-closing'' = mean fraction of $\text{CG}$ closed by Heuristic P on profiles with $\text{CG} > 0$ on which P returns a non-$\bot$ output; ``Hit rate'' = fraction of profiles on which P returns a non-$\bot$ output; ``Combined'' = their product, the effective gap closure over all profiles with $\text{CG} > 0$.}
\label{table:simulation}
\centering
\small
\begin{tabular}{@{}llrrrr@{}}
\toprule
Setting & $n$ & CG~$> 0$ & Gap-closing & Hit rate & Combined \\
\midrule
2D Euclidean & 5 & 99\% & 41\% & 67\% & 27\% \\
2D Euclidean & 21 & 100\% & 79\% & 96\% & 76\% \\
2D Euclidean & 51 & 96\% & 96\% & 98\% & 94\% \\
Simplex ($m=4$) & 5 & 97\% & 52\% & 78\% & 41\% \\
Simplex ($m=4$) & 21 & 99\% & 81\% & 96\% & 78\% \\
Hypercube ($|A|=8$) & 5 & 69\% & 97\% & 66\% & 64\% \\
Hypercube ($|A|=8$) & 21 & 36\% & 100\% & 36\% & 36\% \\
Permutations ($m=5$) & 5 & 63\% & 82\% & 53\% & 43\% \\
Permutations ($m=5$) & 21 & 32\% & 99\% & 31\% & 31\% \\
\bottomrule
\end{tabular}
\end{table}

Heuristic P closes a growing fraction of CG as $n$ grows; the effective rate is given in the Combined column.

\section{Outlook}
\label{section:discussion}

We have presented a framework that integrates aggregation over metric spaces, reality-aware social choice, supermajority-based constitutional amendment, constitutional consensus, deliberative coalition formation, and AI mediation into a per-round polynomial-time process; we close with implications and open questions.

\mypara{What the framework provides} A constitutional governance rule that scales across decision types (officers, rates, budgets, rankings, boards, bylaws, the constitution), a supermajority gate, and an open channel admitting proposals from any source. Constitutional design reduces to specifying, per component, the metric space, aggregation rule, threshold, and novelty distance.

\mypara{Cooperative governance and the choice of $\phi$} The generalised median is motivated by cooperative governance: economic models require the cooperative's decision process to be representable as the maximisation of an objective function, and for $\sigma = 1/2$ the rule maximises the median member's utility. A community whose objective is utilitarian would select the mean instance (Appendix~\ref{appendix:mean}). The choice is itself a constitutional decision.

\mypara{Public proposals from any source} The framework admits proposals from any source---pairwise heuristics, optimization, AI mediation, coalition deliberation---on equal footing. In NP-hard settings, approximation algorithms and AI mediators are first-class participants: an approximate optimum that scores above the best peak wins.

\mypara{Four future work directions}   \mysubpara{Sharper compromise gap bounds} The Lipschitz bound of Proposition~\ref{proposition:lipschitz} is loose. Two directions are open: instance-dependent bounds exploiting median-space structure, and smoothed-analysis bounds under realistic priors on~$V$. \mysubpara{Cooperative economics} The connection to economic models of cooperatives should be developed, incorporating heterogeneous member preferences, multi-dimensional decision spaces, and the economic consequences of generalised median versus mean instances. \mysubpara{Implementation and empirical evaluation} The framework is concrete enough to deploy. Empirical evaluation in realistic cooperative settings---deliberation dynamics, round cadence choices, the interplay between vote updates and public-proposal submissions over time---would complement the formal analysis with operational evidence. \mysubpara{Architectural context} The framework can serve as a decision layer for grassroots digital communities~\cite{shapiro2023grassrootsBA}, alongside the consensus layer of~\cite{keidar2025constitutional} and the federation layer of~\cite{shapiro2025GF}. Composing the three layers into a complete grassroots-governance architecture is a natural direction.

\mypara{Acknowledgement} This work is an answer to Francesco Caselli's request---a decision rule that maximises the utility of the median worker---needed to develop the economics of an incentive-compatible digital cooperative. We now hope and expect such economics to be developed.

\bibliographystyle{splncs04}
\bibliography{bib}

\newpage

\appendix

\section{Proofs}
\label{appendix:proofs}

This appendix collects the proofs of results stated in the body, in document order.

\proptermination*
\begin{proof}
We bound the total number of distinct public proposals that can be admitted in a single epoch, which suffices: once no further admissible public proposal exists, two-round quiescence (no public proposal admitted, hence the same proposal set and same winner) is reached in at most one additional round.

Fix an epoch with sealed votes $V = \{v_1, \ldots, v_n\}$. By the proposer-preference condition, every admissible public proposal $c$ satisfies $u(v_i, c) > 0$ for the proposer~$i$, i.e., $d(v_i, c) < d(v_i, s)$. Hence every admissible public proposal lies in
\[ B := \bigcup_{i=1}^n B_{d(v_i, s)}(v_i), \]
the union of open balls around the votes with radii equal to their distances from the status quo. The set $B$ is bounded; since $(X, d)$ is totally bounded, $B$ admits a finite cover by $\varepsilon/2$-balls. Let $N$ denote the size of such a cover.

The $\varepsilon$-novelty condition requires that every admitted public proposal lie at distance $\ge \varepsilon$ from every previously-submitted public proposal (active or overridden) and from every vote. Two public proposals in the same $\varepsilon/2$-ball are at distance $< \varepsilon$ from each other by the triangle inequality, so each $\varepsilon/2$-ball contains at most one admitted public proposal across the entire epoch. Hence the number of distinct public proposals admitted in the epoch is at most $N$, finite.

When no further admissible public proposals exist the round produces the same winner (or no winner) as the previous round. Two-round quiescence is reached, and the epoch terminates. \qed
\end{proof}

\thmcondorcet*
\begin{proof}
Each proposal $p \in P^r$ receives a real-valued aggregate score $u_\phi(p) \in \mathbb{R}$. The maximum of a finite set of real numbers always exists, so when at least one supported proposal has $u_\phi > 0$, the set of supported proposals attaining the maximum is non-empty and the constitutionally-specified tie-break selects one for adoption. Otherwise no proposal is adopted. Since each proposal is compared only against the status quo (yielding a real number) and never against another proposal, no pairwise majority comparison enters the rule, and the strict total order on real numbers precludes cycling. \qed
\end{proof}

\propmajoritarian*
\begin{proof}
Suppose at least $\lceil \sigma n \rceil$ members share ideal element $w \neq s$. For the proposal $p = w$, every member with $v_i = w$ has $u(v_i, w) = d(w, s)$, the maximum possible utility (by the triangle inequality, $u(v_i, p) \le d(v_i, s)$ with equality at $p = v_i$). The $\lceil \sigma n \rceil$-th largest entry of $u(w)$ is therefore $d(w, s)$, so $\phi_\sigma(u(w)) = d(w, s) > 0$ and $w$ is supported.

For any $p' \neq w$, each of the $\lceil \sigma n \rceil$ members at $w$ has $u(w, p') = d(w, s) - d(w, p') < d(w, s)$. So at least $\lceil \sigma n \rceil$ entries of $u(p')$ are strictly below $d(w, s)$, leaving at most $n - \lceil \sigma n \rceil$ entries that could attain $d(w, s)$. For $\sigma > 1/2$ or odd $n$, $n - \lceil \sigma n \rceil < \lceil \sigma n \rceil$, so the $\lceil \sigma n \rceil$-th largest entry of $u(p')$ is strictly below $d(w, s)$. Hence $\phi_\sigma(u(p')) < \phi_\sigma(u(w))$, and $w$ is the unique winner. \qed
\end{proof}

\propdominance*
\begin{proof}
Order the votes as $v_{(1)} \le \cdots \le v_{(n)}$, and let $k = \lceil \sigma n \rceil$. In 1D, the utility $u(v_q, p) = |v_q - s| - |v_q - p|$ is monotone in $v_q$ for each fixed $p$ (non-decreasing when $p > s$, non-increasing when $p < s$); hence the $k$-th largest entry of $u(p)$ equals $u(v_m, p)$ for $v_m := v_{(k)}$ (when $p > s$) or $v_{(n-k+1)}$ (when $p < s$). Thus $\phi_\sigma(u(p)) = |v_m - s| - |v_m - p| \le |v_m - s|$, with equality only at $p = v_m$; the winning proposal in $V$ is the one closest to $v_m$, and no public proposal alters this winner.

Fix member $i$ with true ideal $v_i^* \in \mathbb{R}$. Let $v_m^0$ be the $k$-th positional vote under sincere reporting and $v_m'$ under any misreport $\tilde v_i$. The member's true utility for the winner is $|v_i^* - s| - |v_i^* - v_m'|$, decreasing in $|v_i^* - v_m'|$. We show $|v_i^* - v_m^0| \le |v_i^* - v_m'|$ for every $\tilde v_i$.

If $v_i^* > v_m^0$, any misreport $\tilde v_i \ge v_m^0$ leaves $v_m^0$ unchanged; any misreport $\tilde v_i < v_m^0$ shifts the $k$-th positional vote leftward, giving $v_m' \le v_m^0 < v_i^*$ and $|v_i^* - v_m'| \ge |v_i^* - v_m^0|$. The case $v_i^* < v_m^0$ is symmetric. If $v_i^* = v_m^0$, any misreport either leaves $v_m^0$ fixed or shifts to a neighbouring vote, weakly farther from $v_i^*$. \qed
\end{proof}

\propmultidimnotsp*
\begin{proof}
Take $n = 3$, $s = (0,0)$, true ideals $v_1^* = (1,0)$, $v_2^* = (0,1)$, $v_3^* = (-1,-1)$. Under sincere voting with no public proposals, no proposal is $1/2$-supported with positive median utility, so the status quo is retained and member~1's true utility is~0. Under misreport $\tilde v_1 = (0.5, 0.5)$, the proposal $(0.5, 0.5)$ has utilities approximately $(0.707, 0.293, -0.707)$ on reported votes, with median $0.293 > 0$ and two positive entries; $(0.5, 0.5)$ is supported and is the winner. Member~1's true utility for it is $1 - \sqrt{0.5} \approx 0.293 > 0$. \qed
\end{proof}

\thmprotocolsp*
\begin{proof}
Fix member~$i$, true ideal $v_i^* \neq s$, and misreport $\tilde v_i \neq v_i^*$; let $k = \lceil n/2 \rceil$. We construct a separating profile: $k-1$ of the other members vote $v_i^*$, $k-1$ vote $\tilde v_i$, and the remaining $n - 2k + 1$ vote $s$. (For odd $n$, $n - 2k + 1 = 0$; for even $n$, exactly one member votes $s$.) No public proposals are submitted in any round, so the analysis reduces to the voting round.

\emph{Sincere case.} Member~$i$ votes $v_i^*$. Total votes: $k$ at $v_i^*$, $k-1$ at $\tilde v_i$, $n - 2k + 1$ at $s$.
\begin{itemize}
    \item Proposal $v_i^*$: the $k$ voters at $v_i^*$ each have utility $d(v_i^*, s)$; the $k-1$ voters at $\tilde v_i$ each have utility $d(\tilde v_i, s) - d(\tilde v_i, v_i^*) \le d(v_i^*, s)$ (triangle inequality); the voters at $s$ have utility $-d(v_i^*, s)$. The $k$-th largest is $d(v_i^*, s)$, so $\phi_\sigma(u(v_i^*)) = d(v_i^*, s) > 0$ and $v_i^*$ is supported.
    \item Proposal $\tilde v_i$: the $k-1$ voters at $\tilde v_i$ have utility $d(\tilde v_i, s)$; the $k$ voters at $v_i^*$ have utility $d(v_i^*, s) - d(v_i^*, \tilde v_i) < d(v_i^*, s)$; the rest have utility $-d(\tilde v_i, s)$. Only $k-1 < k$ entries equal $d(\tilde v_i, s)$, so the $k$-th largest is at most $d(v_i^*, s) - d(v_i^*, \tilde v_i) < d(v_i^*, s)$.
\end{itemize}
Hence $\phi_\sigma(u(v_i^*)) > \phi_\sigma(u(\tilde v_i))$ and $v_i^*$ wins. The epoch quiesces at $v_i^*$, and member~$i$'s true utility is $d(v_i^*, s)$.

\emph{Misreport case.} Member~$i$ votes $\tilde v_i$. Total votes: $k-1$ at $v_i^*$, $k$ at $\tilde v_i$, $n - 2k + 1$ at $s$.

If $\tilde v_i \neq s$: by symmetric analysis, $\phi_\sigma(u(\tilde v_i)) = d(\tilde v_i, s) > 0$ and $\tilde v_i$ wins. Member~$i$'s true utility for $\tilde v_i$ is $d(v_i^*, s) - d(v_i^*, \tilde v_i) < d(v_i^*, s)$.

If $\tilde v_i = s$: proposal $\tilde v_i = s$ has all-zero utilities, $\phi_\sigma = 0$, and is not supported; proposal $v_i^*$ has only $k-1$ positive entries (the voters at $v_i^*$, since member~$i$ now votes $s$), so it is not supported either. No proposal is supported, the status quo is retained, and member~$i$'s true utility is $0 < d(v_i^*, s)$.

In every case, the misreport yields strictly lower true utility than sincere voting. \qed
\end{proof}

\coronedprotocolsp*
\begin{proof}
By Proposition~\ref{proposition:1d-dominance}, the per-round winner is the proposal closest to the $\lceil \sigma n \rceil$-th positional vote, and public proposals do not alter it. Within an epoch votes are fixed, so the per-round winner is invariant across rounds; the epoch quiesces at this winner. Per-round dominance transfers to the epoch outcome. \qed
\end{proof}

\thmzerogap*
\begin{proof}
By Proposition~\ref{proposition:1d-characterisation}, for every $x \in X$ the median utility is $\phi_\sigma(u(x)) = |v_m - s| - |v_m - x|$, where $v_m$ is the positional median of $V$. This expression is uniquely maximised at $x = v_m \in V$, with value $|v_m - s|$. Hence $\text{OPT}_{\phi_\sigma}(V, s) = |v_m - s|$ is attained at a peak, and $\text{CG}_{\phi_\sigma}(V, s) = 0$. \qed
\end{proof}

\proplipschitz*
\begin{proof}
For every member with ideal $q$ and every pair of points $x^*, p \in X$,
\[
|u(q, x^*) - u(q, p)| = |d(q, p) - d(q, x^*)| \le d(x^*, p)
\]
by the reverse triangle inequality. Hence $\|u(x^*) - u(p)\|_\infty \le d(x^*, p)$. The $1$-Lipschitz hypothesis on $\phi$ gives $|\phi(u(x^*)) - \phi(u(p))| \le d(x^*, p)$. Minimising over $p \in V$ yields the claim. \qed
\end{proof}

\section{Related Work}
\label{appendix:related}

\mypara{Aggregation over metric spaces} Bulteau, Shahaf, Shapiro, and Talmon~\cite{bulteau2021aggregation} introduced the framework of aggregation over metric spaces, studying Condorcet aggregation and the $L_p$ family across six social choice settings. The present work uses their metric-space instantiations and admits the $L_p$ family as one source of public proposals (Section~\ref{section:gap}); the constitutional governance rule trades unconstrained optimality for polynomial-time existence and supermajority discipline.

\mypara{Reality-aware social choice} Shapiro and Talmon~\cite{shapiro2018incorporating} introduced the status quo as a distinguished, ever-present alternative in social choice, and showed how it breaks Condorcet cycles. This has been applied to sybil-resilient voting~\cite{shahaf2019sybil,meir2022sybil} and deliberative coalition formation~\cite{elkind2021united,elkind2022complexity}. The constitutional governance rule places the status quo at the centre: every proposal is compared only to it, yielding a scalar score that cannot cycle in any metric space (Theorem~\ref{theorem:condorcet}).

\mypara{Constitutional governance and consensus} Abramowitz, Shapiro, and Talmon~\cite{abramowitz2021beginning} studied the founding and amendment of constitutions, deriving from natural axioms that the initial decision rule must be simple majority and that the unique self-referential amendment process is the $h$-rule. Keidar, Lewis-Pye, Shapiro, and Talmon~\cite{keidar2025constitutional} instantiated this in the constitutional consensus protocol, combining democratic decision processes for amending the participant set, threshold, and timeout with a Byzantine-fault-tolerant consensus protocol. The present work incorporates the $\sigma$-parameterisation and the structural distinction between laws and the constitution, enabling unified treatment of decisions and amendments under the constitutional governance rule, with the $h$-rule reserved for the one self-referential case of $\sigma_C$ amending itself.

\mypara{Deliberative coalition formation and AI mediation} Elkind, Grossi, Shapiro, and Talmon~\cite{elkind2021united} model deliberation as the formation of coalitions around compromise proposals that can defeat the status quo; the complexity of deliberative coalition formation is studied in~\cite{elkind2022complexity}. AI mediation~\cite{briman2025ai} generates compromise proposals acceptable to disagreeing parties. The framework admits both as sources of public proposals, scored under the constitutionally-specified aggregation rule.

\mypara{Median spaces and strategy-proofness} Nehring and Puppe~\cite{nehring2007structure} characterised the domains on which strategy-proof, anonymous, and efficient social choice functions exist, identifying median spaces as the exact boundary; Brady and Chambers~\cite{brady2017spatial} provided axiomatic characterisations of the median as the unique rule satisfying anonymity and Maskin monotonicity. These results provide theoretical grounding for the generalised median instance beyond one dimension. Nehring and Puppe~\cite{nehring2019frugal} also proposed the frugal majority rule for budget allocation, philosophically close to our setting in starting from peaks alone; the constitutional governance rule differs by restricting the outcome to actual proposals, producing a unique winner, and anchoring on the status quo. Puppe and Rollmann~\cite{puppe2021mean} provided experimental support for median rules over mean rules in budget allocation.

\mypara{Metric distortion and facility location} The metric distortion literature~\cite{anshelevich2018approximating,goel2017metric,procaccia2006distortion} studies how well voting rules approximate optimal social choice when voters and alternatives lie in a common metric space; the median objective admits constant-factor approximation with ordinal information. Hakimi~\cite{hakimi1964optimum} established that on networks the facility location minimising total distance is always at a vertex---classical justification for restricting outcomes to discrete alternatives---and Procaccia and Tennenholtz~\cite{procaccia2009approximate} explored mechanism-design analogues. The compromise gap of Section~\ref{section:gap} is a metric-distortion-style quantity, measuring how much restriction to the proposal set costs.

\mypara{The spatial model} The spatial model of elections~\cite{SpatialModel} infers preferences from voter ideal points using an underlying metric, extended in many directions~\cite{peters1993generalized}. The framework adopts the spatial model's distance-induced preferences but differs in restricting solutions to member proposals and centring the status quo.

\section{Anonymity and Neutrality}
\label{appendix:anonymity}

The rule is \emph{neutral} among proposals by construction: it depends on a proposal $p$ only through its utility vector $u(p)$, which is computed from $d$, $s$, and $V$. No proposal-specific label, identity, or exogenous weight enters. The rule is \emph{anonymous} whenever $\phi$ is a symmetric function of its arguments: under this hypothesis, the rule depends on $V$ only through the multiset of utilities. The generalised median is symmetric, hence the median instance is anonymous.

\section{Finite Termination: Verification}
\label{appendix:termination-proof}

The proof of Proposition~\ref{proposition:termination} is in Appendix~\ref{appendix:proofs}.

The seven application settings of Section~\ref{section:metrics} are all totally bounded: bounded subsets of $\mathbb{R}$ are totally bounded; the simplex is compact; the discrete, permutation, and subset metrics are defined on finite sets; and balls in $\Sigma^*$ under weighted Levenshtein distance are finite, since strings within a bounded edit distance of a fixed string over a finite alphabet are finitely many.

\section{Monotonicity}
\label{appendix:monotonicity-counterexample}

\begin{definition}[Vote-to-outcome monotonicity]
\label{definition:monotonicity}
An aggregation method is \emph{monotone} if, for every vote profile $V$ and every co-winner $w$, changing a member's ideal element to $w$ does not cause $w$ to lose.
\end{definition}

\begin{restatable}{proposition}{proponedmonotonicity}
\label{proposition:1d-monotonicity}
In one-dimensional settings (with the absolute-value metric) and $\sigma = 1/2$, the constitutional governance rule with the generalised median is monotone.
\end{restatable}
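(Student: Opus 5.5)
The plan is to reduce everything to the 1D characterisation already used in the proof of Proposition~\ref{proposition:1d-dominance}. With $k = \lceil n/2 \rceil$, set $v_m^+ := v_{(k)}$ and $v_m^- := v_{(n-k+1)}$; for odd $n$ these coincide. For every proposal $p$,
\[ \phi_\sigma(u(p)) = |v_m - s| - |v_m - p| , \]
where $v_m = v_m^+$ if $p > s$ and $v_m = v_m^-$ if $p < s$. A co-winner $w$ is therefore a supported proposal in $P^r$ that maximises this score and has positive score. Fix such a $w$. By symmetry I will take $w > s$, the case $w < s$ being the mirror image. Positivity then forces $v_m^+ > s$, and $w$ is a point of $P^r \cap (s,\infty)$ closest to $v_m^+$. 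Moreover, no proposal below $s$ scores more than $\phi_\sigma(u(w))$.

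Now let member $j$ change its ideal from $v_j$ to $w$, giving profile $V'$. The key step is to show that $w$ is at least as close to the new positional vote $v_m'^+$ as before, relative to the competing proposals. Two cases arise.

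\emph{Case (a): $v_j$ and $w$ lie on the same side of $v_m^+$, or $v_j = v_m^+$.} The $k$-th order statistic either stays put or moves toward $w$, and never past it. Since $w$ itself is now a vote, $v_m'^+$ lies in the closed interval between $v_m^+$ and $w$.

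\emph{Case (b): $v_j$ and $w$ lie on opposite sides.} Moving one vote across the $k$-th position shifts $v_m'^+$ to the adjacent order statistic in the direction of $w$. Again $v_m'^+ \in [\min(v_m^+, w), \max(v_m^+, w)]$.

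In both cases, any proposal $p > s$ satisfies $|v_m^+ - p| \ge |v_m^+ - w|$. Together with $v_m'^+$ lying between $v_m^+$ and $w$, the triangle inequality gives $|v_m'^+ - p| \ge |v_m'^+ - w|$. Hence $w$ remains a closest point to $v_m'^+$ among proposals above $s$. Its new score $|v_m'^+ - s| - |v_m'^+ - w|$ is still positive, because $v_m'^+$ lies between $v_m^+$ and $w$, both of which exceed $s$. So $w$ stays supported. The proposal set changes only in that $v_j$ is replaced by $w$, which already belonged to $P^r$, so no new competitor appears.

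The main obstacle is ruling out that a proposal $p < s$ overtakes $w$. Such a $p$ is scored against $v_m'^-$, and moving $v_j$ up to $w$ pushes $v_m'^-$ weakly up, toward $s$ or beyond. Because $\sigma = 1/2$, every order statistic below $v_m^-$ is unaffected or rises, so $v_m'^- \ge v_m^-$. The score of $p$, namely $|v_m'^- - s| - |v_m'^- - p|$, is positive only if $v_m'^- < s$, and in that regime it is nonincreasing as $v_m'^-$ moves toward $s$. So no $p < s$ gains score.

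It remains to check that $w$'s own score does not fall below the old competitors' scores. Here I would compare directly. For proposals above $s$ this follows from the closest-point argument just given. For proposals below $s$, their scores weakly decreased, while $w$'s new score $|v_m'^+ - s| - |v_m'^+ - w|$ is at least $|v_m^+ - s| - |v_m^+ - w|$. This last inequality is the 1-Lipschitz computation: moving $v_m$ toward $w$ by $\delta$ raises $|v_m - s|$ by at least as much as it changes $|v_m - w|$.

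For even $n$, where $v_m^+ \ne v_m^-$, I would verify these order-statistic shifts separately. Once they are checked, ties are handled by the definition of co-winner, and $w$ does not lose.
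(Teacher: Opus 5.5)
For odd $n$ your argument is the paper's own. Every proposal scores $|v_m-s|-|v_m-p|$, and moving a vote to $w$ shifts the positional median toward $w$ without passing it. So $w$ stays a closest proposal and its score does not drop. You are more careful than the paper on two points: positivity, and comparing $w$ with all proposals rather than only other co-winners. The problems are in what you add beyond this.

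\textbf{Swapped positional voters.} For $p>s$ the utility $u(v_q,p)$ is non-decreasing in $v_q$. So the $\lceil \sigma n\rceil$-th largest entry is $u(v_{(n-k+1)},p)$, as Proposition~\ref{proposition:1d-characterisation} states; for $n=2$, $\phi_\sigma$ is the maximum, attained at the larger vote. The line in the proof of Proposition~\ref{proposition:1d-dominance} that you relied on has the same slip. This is invisible for odd $n$, but for even $n$ your $v_m^{+}$ is the wrong order statistic.

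\textbf{The ``main obstacle'' paragraph.} It assumes $v_j<w$ (``moving $v_j$ up''), which the reduction to $w>s$ does not provide. When $v_j>w$, every order statistic weakly falls, so your stated reason for $v_m'^{-}\ge v_m^{-}$ gives nothing. For odd $n$ the paragraph is unnecessary anyway: $v_m'^{-}=v_m'^{+}$ lies between $v_m^{+}$ and $w$, hence above $s$, so every $p<s$ scores $p-s<0$.

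\textbf{Even $n$ is only promised.} As written you prove the odd case alone, which is also all the paper's single-median argument covers. The even case closes quickly once you use $V\subseteq P^r$. Take $w>s$ and $v_U:=v_{(n/2+1)}$.
Positivity of $w$'s score forces $v_U>s$. Moreover, $v_U$ is itself a proposal scoring $|v_U-s|$, the largest value any proposal above $s$ can attain. Hence $w=v_U$.
\begin{itemize}
\item Moving a vote from above $v_U$ down to $v_U$ leaves $v_{(n/2)}$ and $v_{(n/2+1)}$ unchanged, and hence leaves every remaining proposal's score unchanged.
\item Moving a vote from below up to $v_U$ makes the new $v_{(n/2)}$ equal to $v_U>s$. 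Then every $p<s$ scores $p-s<0$, while $w$ keeps $|v_U-s|$.
\end{itemize}
The case $w<s$ is symmetric with $v_{(n/2)}$. The same observation ($w=v_m$) makes the odd case a one-liner.

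\textbf{Two smaller slips.} Positivity of $w$'s new score needs $v_m'^{+}>(s+w)/2$, not merely $v_m'^{+}>s$; it holds because both endpoints exceed $(s+w)/2$. In the Lipschitz step, $|v_m-s|$ need not rise. The correct point is that $|v_m-w|$ falls by exactly $\delta$, while $|v_m-s|$ falls by at most $\delta$.
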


\begin{proof}
The median utility of a proposal $p$ equals $u(v_m, p) = |v_m - s| - |v_m - p|$, where $v_m$ is the positional median of the votes (Proposition~\ref{proposition:1d-characterisation}). When member $i$ changes from $v_i$ to $w$ (a co-winner), $v_m$ moves towards $w$ or stays fixed; let $v_m'$ denote the new median. For $w$: $|v_m' - w| \le |v_m - w|$, so the median utility of $w$ does not decrease. For any $p \neq w$ that was a co-winner: $|v_m - p| \ge |v_m - w|$ before the move, and $|v_m' - p| \ge |v_m' - w|$ after (since $v_m'$ moves towards $w$, $p$ remains weakly farther from $v_m'$ than $w$ is). Hence the median utility of $w$ remains at least that of $p$ in the new profile. \qed
\end{proof}

In general metric spaces, monotonicity does not hold; the counterexample below uses seven members and exhibits a profile in which the winner $w$ loses after a member moves to $w$. The mechanism is that the departing member's utility for a competing proposal sits at the median position; when the member moves, the next value in the sorted order is revealed at the median and exceeds the unchanged median utility of $w$.

\mypara{Public-proposal-channel monotonicity} The framework separates two roles. The vote is an immutable, sealed expression of preference. The public proposal is a submission a member is willing to support. Vote-to-outcome monotonicity (Definition~\ref{definition:monotonicity}) is the right axiom for rules in which the vote is the only support mechanism (as in Bulteau et al.~\cite{bulteau2021aggregation}, whose $L_p$ aggregator is non-monotone for every $p > 1$). Under our framework, where support is expressed via the public-proposal channel, the relevant axiom is \emph{public-proposal-channel monotonicity}: submitting a proposal as a public proposal never decreases the submitter's utility. This holds for the constitutional governance rule by construction, regardless of metric space dimension---admitting a public proposal leaves all other proposals' scores unchanged, so either the winner is unchanged or it becomes the submitted public proposal. The vote-to-outcome failure in higher dimensions therefore identifies a real but axiom-mismatched concern; the framework's two-channel design answers it.

\subsection{Counterexample}

We show that the constitutional governance rule with $\sigma = 1/2$ is not monotone in general metric spaces. The counterexample uses seven members and a metric space with points $s, w, p, v_1, \ldots, v_5$.

\mypara{Setup} Let $d(w, s) = 10$, $d(p, s) = 10$, $d(w, p) = 5$, and $d(v_j, s) = 10$ for $j = 1, \ldots, 5$. The remaining distances are:
\begin{center}
\begin{tabular}{@{}cccc@{}}
\toprule
Member & $d(v_j, w)$ & $u(v_j, w)$ & $u(v_j, p)$ \\
\midrule
1 & 13 & $-3$ & $-6$ \\
2 & 12 & $-2$ & $-4$ \\
3 & 11 & $-1$ & $-2$ \\
4 & 9.99 & $0.01$ & $-0.01$ \\
5 & 9.98 & $0.02$ & $0.1$ \\
6 ($= p$) & 5 & $5$ & $10$ \\
7 ($= w$) & 0 & $10$ & $5$ \\
\bottomrule
\end{tabular}
\end{center}
All distances satisfy the triangle inequality; inter-member distances are realised via shortest paths through $\{s, w, p\}$.

\mypara{Before the move} Sorted utilities for $w$: $(-3, -2, -1, 0.01, 0.02, 5, 10)$; median~=~$0.01$. Sorted utilities for $p$: $(-6, -4, -2, -0.01, 0.1, 5, 10)$; median~=~$-0.01$. Every other proposal $v_j$ ($j = 1, \ldots, 5$) has only one supporter and median utility well below~$0$. Thus $w$ is the unique winner, with $\phi_\sigma(u(w)) = 0.01 > -0.01 = \phi_\sigma(u(p))$.

\mypara{Member 4 moves to $w$} Member~4 changes its ideal element from $v_4$ to $w$. The new utilities for member~4 are: $u(w, w) = 10$ and $u(w, p) = 5$.

Sorted utilities for $w$: $(-3, -2, -1, 0.02, 5, 10, 10)$; median~=~$0.02$. Member~4's value jumped from $0.01$ to $10$, but member~5's value $0.02$ now holds the median position. Sorted utilities for $p$: $(-6, -4, -2, 0.1, 5, 5, 10)$; median~=~$0.1$. Member~4's value jumped from $-0.01$ to $5$, and member~5's value $0.1$ now occupies the median position.

Now $\phi_\sigma(u(p)) = 0.1 > 0.02 = \phi_\sigma(u(w))$, so $w$ loses to~$p$. Monotonicity is violated.

\mypara{Mechanism} The triangle inequality guarantees that when a member moves to the winner $w$, its utility for $w$ increases at least as much as its utility for any competitor $p$. However, in general metric spaces the median can fail to respond: if member~4's utility for $w$ was already at a ``plateau'' of small positive values, its increase does not shift the median of $w$ correspondingly. Meanwhile, member~4's utility for $p$ was at the median position, and its departure reveals the next value in the sorted order, which exceeds the new median of~$w$. In one-dimensional settings, this cannot occur: the positional median moves monotonically towards $w$ when a member moves to $w$, preventing any competitor from overtaking.

\section{Mean Aggregation as an Alternative Instantiation}
\label{appendix:mean}

The framework is parametrised by an aggregation rule $\phi$ (Definition~\ref{definition:rule}); Section~\ref{section:rules} develops the generalised median. The mean is an alternative utilitarian instantiation. It satisfies all framework-level hypotheses of Section~\ref{section:framework_properties}---$O(n)$-computable via direct summation, symmetric, and zero at indifference---and inherits anonymity, reality-awareness, and Condorcet-cycle immunity (Theorem~\ref{theorem:condorcet}).

\begin{definition}[Mean]
\label{definition:mean}
The \emph{mean} aggregation rule is $\phi_{\text{mean}}(u) := \frac{1}{n} \sum_{i=1}^n u_i$.
\end{definition}

The mean is the natural utilitarian aggregator: it ranks proposals by total member gain over the status quo, equivalently by aggregate $L_1$-distance reduction. In the terminology of Bulteau et al.~\cite{bulteau2021aggregation}, the mean instance corresponds to $L_1$ aggregation under the supermajority gate.

Unlike the median, the mean does \emph{not} coincide with the supermajority gate: a proposal can be supported (i.e., strictly preferred to the status quo by at least $\lceil \sigma n \rceil$ members) yet have negative mean utility, or have positive mean yet fail the gate. The framework's gate disciplines the mean: a proposal strongly preferred by a minority but opposed by a supermajority is \emph{not} adopted under the mean instance, because the gate excludes it. This forecloses the classical tyranny-of-the-intense-minority objection to utilitarianism.

\mypara{Majoritarity} The mean instance is $\sigma$-majoritarian under the framework's gate: if $\lceil \sigma n \rceil$ members share $w \neq s$, then $w$ is supported, and---unless a distinct supported proposal has higher mean---$w$ is adopted. Without the gate (i.e., under pure utilitarian mean aggregation as in~\cite{bulteau2021aggregation}), the mean instance would not be $\sigma$-majoritarian: a minority strongly preferring $s$ could overwhelm a supermajority's aggregate gain. The gate forecloses this case.

\mypara{Per-round non-strategy-proofness} The mean instance is non-SP per round: a member can shift the aggregate by misreporting an exaggerated position, the standard utilitarian-voting weakness.

\mypara{Median or mean} A community choosing between the generalised median and the mean is weighing classical utilitarianism against median-voter representation. The mean maximises aggregate utility and is the natural choice when the community's objective is the sum of members' gains. The generalised median protects against tyranny-of-the-minority and is the natural choice when each decision must command supermajority support of broadly aligned values. Either is a constitutional choice, amendable via the framework's self-amendment mechanism (Section~\ref{subsection:self-amendment}).

\section{Per-Round Anatomy of Manipulation under $\phi_\sigma$}
\label{appendix:per-round-anatomy}

The following three lemmas characterise the structural anatomy of single-round manipulation under $\phi_\sigma$: every successful misreport falls into one of two cases, each of which has its own diagnostic.

\begin{restatable}[Self-defeating flip from unsupported]{lemma}{lemflipunsupported}
\label{lemma:flip-unsupported}
Fix $\sigma \in [1/2, 1)$ and a proposal $W$ that is not supported under sincere voting. Suppose member $i$ unilaterally misreports, and under the misreport $W$ becomes supported. Then $u(v_i^*, W) \le 0$: member~$i$ weakly prefers the status quo to $W$.
\end{restatable}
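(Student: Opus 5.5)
The plan is a counting argument on the support set $S(W)$, proved by contraposition. Support of $W$ depends only on the number of indices $j$ with $u(v_j, W) > 0$, computed from the \emph{reported} votes. A unilateral misreport by member~$i$ changes only the $i$-th entry of the utility vector $u(W)$, and every other entry $u(v_j, W)$ with $j \neq i$ stays fixed. So the size of the support set can change by at most one, and only through member~$i$'s own entry.

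First, record what the two hypotheses give. Under sincere voting, $W$ is not supported, so $|S(W)| < \lceil \sigma n \rceil$, where member~$i$'s entry is $u(v_i^*, W)$. Under the misreport, $W$ is supported, so $|\tilde S(W)| \ge \lceil \sigma n \rceil$, where member~$i$'s entry is $u(\tilde v_i, W)$. Since the other $n-1$ entries are identical in both profiles, we have
\[ |\tilde S(W)| - |S(W)| = \mathbf{1}[u(\tilde v_i, W) > 0] - \mathbf{1}[u(v_i^*, W) > 0]. \]
This difference must be strictly positive, so it equals $1$. That forces $u(v_i^*, W) \le 0$ (and $u(\tilde v_i, W) > 0$). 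This is exactly the claim: member~$i$ weakly prefers $s$ to $W$ under the true ideal.

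Two points need care. The first is that the status quo $s$ and the proposal $W$ must not depend on the report. This holds because $W$ is a fixed element of $X$; whether it enters $P^r$ as a vote or as a public proposal is irrelevant to its support, since support is defined pointwise via $S(\cdot)$. The second is that ``supported'' in the lemma means the supermajority gate of Section~\ref{subsection:rule}, independent of $\phi$. For $\phi_\sigma$ this coincides with $\phi_\sigma(u(W)) > 0$ by the equivalence noted after Definition~\ref{definition:median}, so the argument covers either reading.

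There is no real obstacle here. The only subtle step is making explicit that the other members' reported votes are held fixed, so their indicator terms cancel. Everything else is the one-line indicator computation above.
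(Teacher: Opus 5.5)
Your proposal is correct and follows essentially the same counting argument as the paper: a unilateral misreport changes only member~$i$'s entry of $u(W)$, so the number of strictly positive entries can rise by at most one, and a flip from fewer than $\lceil \sigma n \rceil$ to at least $\lceil \sigma n \rceil$ forces $u(v_i^*, W) \le 0$. Your indicator identity simply makes the paper's ``can increase by at most one'' step explicit.
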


\begin{proof}
Let $k = \lceil \sigma n \rceil$. $W$ is supported iff at least $k$ entries of $u(W)$ are strictly positive. Under sincere voting, at most $k-1$ entries of $u(W)$ are positive. Misreporting changes only member~$i$'s entry, so the count of positive entries can increase by at most one---from $\le 0$ to $> 0$. A flip from unsupported to supported therefore requires $u(v_i^*, W) \le 0$. \qed
\end{proof}

\begin{restatable}[Public proposal replicates supported flip]{lemma}{lemcompromisereplicates}
\label{lemma:compromise-replicates}
Fix any proposal $c \neq s$ with $u(v_i^*, c) > 0$ and any votes of the other $n-1$ members. Then $c$ is supported under sincere voting by member~$i$ with public proposal $c_i = c$ if and only if $c$ is supported under the misreport $\tilde v_i = c$.
\end{restatable}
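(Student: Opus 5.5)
The plan is to compare the utility vector $u(c)$ in the two scenarios coordinate by coordinate. Support is a property of $u(c)$ computed from the \emph{votes} alone: $c$ is supported iff at least $\lceil \sigma n \rceil$ indices $j$ have $u(v_j, c) > 0$. A public proposal never enters anyone's utility vector; it only enters the proposal set $P^r$. So the two vectors can differ only in member~$i$'s own entry, and the claim reduces to showing that this entry is strictly positive in both scenarios.

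\emph{Sincere scenario.} Member~$i$ votes $v_i^*$ and submits $c_i = c$. Its entry is $u(v_i^*, c)$, which is $> 0$ by hypothesis; this is exactly the proposer-preference condition, so the public proposal is not excluded on those grounds.

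\emph{Misreport scenario.} Member~$i$ votes $\tilde v_i = c$. Its entry is $u(c, c) = d(c, s) - d(c, c) = d(c, s)$, and this is $> 0$ because $c \neq s$.

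The other $n-1$ entries $u(v_j, c)$, $j \neq i$, are identical in both scenarios, since the other votes are fixed. Hence the set of strictly positive entries of $u(c)$ is the same in both scenarios: it consists of $i$ together with the same subset of the others. The count therefore clears $\lceil \sigma n \rceil$ in one scenario iff it does in the other, which gives the equivalence.

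The main obstacle is conceptual rather than computational. One must make sure that "supported under sincere voting with public proposal $c_i = c$" refers only to the gate on $u(c)$, which is determined by the votes. The other admissibility conditions ($\varepsilon$-novelty, improvement) govern whether the public proposal is admitted, not whether it is supported, so they should be set aside explicitly. I would also note that the lemma concerns only support, not winning, since the scores $\phi_\sigma$ of other proposals may differ between the two scenarios.
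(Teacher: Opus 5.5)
Your proof is correct and is essentially the paper's own argument: support depends only on how many entries of $u(c)$ are strictly positive, the $n-1$ entries for $j \neq i$ coincide across the two scenarios, and member~$i$'s entry is positive in both ($u(v_i^*, c) > 0$ by hypothesis, $u(c,c) = d(c,s) > 0$ since $c \neq s$). Your further remarks are accurate and spell out what the paper leaves implicit: support is separate from admissibility (novelty, improvement), and the lemma concerns support, not winning.
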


\begin{proof}
Support depends only on the multiset $\{u(v_j, c) : j \in [n]\}$. The entries for $j \neq i$ are identical in both scenarios. The entry from member~$i$ is $u(v_i^*, c) > 0$ in the public-proposal scenario by hypothesis, and $u(c, c) = d(c, s) > 0$ in the misreport scenario since $c \neq s$. Both entries are strictly positive, so the count of positive entries---and hence the support status of $c$---is the same. \qed
\end{proof}

\begin{restatable}[Self-defeating winner swap]{lemma}{lemwinnerswap}
\label{lemma:winner-swap}
Suppose under sincere voting, $W_0$ and $\tilde W$ are both supported with $\phi_\sigma(u(W_0)) > \phi_\sigma(u(\tilde W)) > 0$, so $W_0$ wins. Suppose member~$i$ unilaterally misreports, and under the misreport both $\phi_\sigma(u(W_0))$ strictly decreases and $\phi_\sigma(u(\tilde W))$ strictly increases. Then $u(v_i^*, W_0) > u(v_i^*, \tilde W)$: member~$i$ strictly prefers $W_0$ to $\tilde W$.
\end{restatable}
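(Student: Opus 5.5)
The plan is a purely order-statistic argument on the single coordinate that member~$i$ controls. Write $k = \lceil \sigma n \rceil$, $a := \phi_\sigma(u(W_0))$ and $b := \phi_\sigma(u(\tilde W))$ under sincere voting, so $a > b > 0$. A unilateral misreport by member~$i$ changes only the $i$-th entry of each utility vector; for every $j \neq i$ the entry $u(v_j, \cdot)$ is unchanged. So it suffices to show two facts about a $k$-th largest entry when a single coordinate is altered, and then combine them.

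\emph{Step 1 (a strict decrease forces the entry to be high).} I would show that if $\phi_\sigma(u(W_0))$ strictly decreases, then $u(v_i^*, W_0) \ge a$. Argue by contraposition. Since $a$ is the $k$-th largest entry, at least $k$ entries of $u(W_0)$ are $\ge a$. Suppose $u(v_i^*, W_0) < a$. Then all of those $k$ entries belong to members $j \neq i$ and survive the misreport. Hence the new $k$-th largest entry is still $\ge a$, and there is no strict decrease.

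\emph{Step 2 (a strict increase forces the entry to be low).} Symmetrically, I would show that if $\phi_\sigma(u(\tilde W))$ strictly increases, then $u(v_i^*, \tilde W) \le b$. Since $b$ is the $k$-th largest entry, at most $k-1$ entries of $u(\tilde W)$ are strictly greater than $b$. Suppose $u(v_i^*, \tilde W) > b$. Then member~$i$ is one of these, so at most $k-2$ entries from the other members exceed $b$. After the misreport at most $k-1$ entries exceed $b$, so the new $k$-th largest entry is still $\le b$, and there is no strict increase.

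\emph{Step 3 (combine).} Steps 1 and 2 give $u(v_i^*, W_0) \ge a > b \ge u(v_i^*, \tilde W)$, which is the claimed strict preference.

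The only delicate point is handling ties in the order statistic. This is why I phrase each step through counts of entries that are ``$\ge a$'' and ``$> b$'' rather than through sorted positions. I expect no real obstacle beyond getting these two counting inequalities exactly right. No metric structure is needed, only that the misreport touches a single coordinate.
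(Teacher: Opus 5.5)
Your proof is correct and follows the paper's argument. The paper also reduces the lemma to two single-entry order-statistic facts: a strict decrease of the $\lceil \sigma n \rceil$-th largest entry forces the replaced entry to be at least it, and a strict increase forces it to be at most it. It then chains these with $\phi_\sigma(u(W_0)) > \phi_\sigma(u(\tilde W))$; your counting arguments in Steps 1 and 2 supply the justification the paper asserts without proof, and they handle ties correctly.
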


\begin{proof}
When a single entry in a list of $n$ reals is replaced, the $\lceil \sigma n \rceil$-th largest can strictly decrease only if the replaced entry was at least the original $\lceil \sigma n \rceil$-th largest, and can strictly increase only if it was at most. So a strict decrease of $\phi_\sigma(u(W_0))$ implies $u(v_i^*, W_0) \ge \phi_\sigma(u(W_0))$; a strict increase of $\phi_\sigma(u(\tilde W))$ implies $u(v_i^*, \tilde W) \le \phi_\sigma(u(\tilde W))$. Combining with $\phi_\sigma(u(W_0)) > \phi_\sigma(u(\tilde W))$:
\[ u(v_i^*, W_0) \ge \phi_\sigma(u(W_0)) > \phi_\sigma(u(\tilde W)) \ge u(v_i^*, \tilde W). \qed \]
\end{proof}

Together, Lemmas~\ref{lemma:flip-unsupported}--\ref{lemma:winner-swap} establish a per-round dichotomy: whenever a member could profitably misreport, either (a)~the manipulation is self-defeating---the member would weakly prefer the unmanipulated outcome (Lemmas~\ref{lemma:flip-unsupported},~\ref{lemma:winner-swap}), or (b)~the same outcome is achievable by sincere voting with a public-proposal submission (Lemma~\ref{lemma:compromise-replicates}). The per-round dichotomy motivates the design of the proposal protocol of Section~\ref{subsection:protocol-sp}: sealing votes denies the misreporter the information needed to construct case-(a) manipulations, and the open public-proposal channel makes case (b) sincere by construction. Theorem~\ref{theorem:protocol-sp} establishes the resulting protocol-level guarantee directly.

\begin{remark}
A concrete example illustrates that the public-proposal channel cannot always replicate a profitable misreport. Take $\mathbb{R}^2$ with Euclidean distance, $n = 3$, $s = (0,0)$, $\sigma = 1/2$, and ideals $v_1^* = (2.5, 1)$, $v_2^* = (1, 2)$, $v_3^* = (-1, -1)$. Under sincere voting, both $v_1^*$ and $v_2^*$ are supported; $v_2^*$ wins with median utility $\approx 0.890$ versus $v_1^*$'s $\approx 0.433$, so member~1's true utility is $\approx 0.890$. Under misreport $\tilde v_1 = (5, -5)$, $v_2^*$ becomes unsupported and $v_1^*$ wins, giving member~1 true utility $\approx 2.693$. By contrast, no public proposal $c$ submitted with sincere voting can lift $v_1^*$ above $v_2^*$: such a $c$ would need member~2's utility $u(v_2^*, c) > 0.890$, which constrains $c$ to a ball around $v_2^*$ excluding the part of $\mathbb{R}^2$ closest to $v_1^*$. The supremum of member~1's public-proposal-channel utility is $\approx 2.236 < 2.693$. A misreport thus achieves what public proposals cannot, by altering the manipulator's own entry in every proposal's median computation. The protocol of Section~\ref{subsection:protocol-sp}, by sealing votes at the start of the epoch, eliminates this possibility.
\end{remark}

\section{Public-Proposal and Coalition Behaviour}
\label{appendix:coalition}

The strategic role of the public-proposal channel is distinct from the voting channel. A member's vote is their immutable, sealed expression of preference; their public proposal carries supermajority public support under the revealed votes. The following propositions characterise rational behaviour on the public-proposal side.

\begin{restatable}[Public proposal is weakly beneficial]{proposition}{propcompromisebeneficial}
\label{proposition:compromise-beneficial}
Submitting a public proposal $c$ is weakly beneficial for the submitter: either the round winner is unchanged, or it becomes $c$. A rational member submits $c$ only when their true utility $u(v_i^*, c)$ exceeds their true utility for the previous round's winner.
\end{restatable}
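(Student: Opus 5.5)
The plan is to compare the round in which $c$ is admitted against the counterfactual round in which it is not. The key structural fact is that the score of any proposal $p$ depends only on its utility vector $u(p) = (u(v_1,p),\ldots,u(v_n,p))$. This vector is computed from the sealed votes $V$, the fixed status quo $s$, and the metric $d$. It is therefore independent of which other proposals lie in $P^r$. The same holds for the support status of $p$, which depends only on the count of strictly positive entries of $u(p)$.

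\emph{Proposal set with and without $c$.} Let $Q$ be the round-$r$ proposal set without member~$i$'s public proposal, and $Q' = Q \cup \{c\}$ the set with it. Submitting $c$ replaces member~$i$'s previous public proposal, if there was one. To keep the comparison clean, I would take $Q$ to already exclude that overridden proposal. The claim then concerns adding a single element.

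\emph{Step 1: the winner is either unchanged or $c$.} By the invariance above, every $p \in Q$ has the same support status and the same score $u_\phi(p)$ in $Q'$ as in $Q$. By Definition~\ref{definition:rule}, the winner over $Q'$ is the supported element of maximal positive score. I would split into two cases.
\begin{itemize}
\item If $c$ is unsupported, or if $u_\phi(c)$ does not exceed the best supported score in $Q$ (with ties resolved by the fixed tie-break, assumed consistent under adding an element), the maximiser is the same element of $Q$. If $Q$ had no winner, then $Q'$ has none either, unless $c$ itself is supported with positive score.
\item Otherwise the maximiser is $c$.
\end{itemize}
Theorem~\ref{theorem:condorcet} guarantees that this maximum is well defined.

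\emph{Step 2: the rationality clause.} By the improvement admissibility condition, an admitted $c$ must score strictly above the preceding round's winner. By Step~1, $c$ then becomes the round winner, displacing the previous winner $W$, or displacing $s$ if there was no winner. Member~$i$'s true payoff changes from $u(v_i^*, W)$ to $u(v_i^*, c)$, so submission is beneficial exactly when $u(v_i^*, c) > u(v_i^*, W)$. Here $u(v_i^*, W) = 0$ when $W = s$. If $c$ fails to win, nothing changes, so submission is weakly beneficial in every case. The proposer-preference condition separately ensures $u(v_i, c) > 0$.

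\emph{Main obstacle.} The difficulty is interpretive rather than technical. First, the tie-breaking rule must be independence-respecting, meaning that adding $c$ does not reorder ties among the old proposals. Second, the withdrawal of member~$i$'s earlier public proposal can itself change the winner. I would address this by reading the statement relative to the proposal set after withdrawal, or equivalently by noting that a rational member would not withdraw a proposal that was winning with higher true utility to them.

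Finally, the "weakly beneficial" claim holds within the round in which $c$ is submitted. Across the epoch it relies on quiescence: later rounds can only be won by proposals whose scores strictly exceed the current winner's, and that constraint is unaffected by $c$'s history.
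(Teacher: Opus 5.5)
Your proof is correct and follows the paper's argument: since votes are sealed, adding $c$ leaves every other proposal's score and support status unchanged, so the winner lies in $\{W, c\}$, and the rationality clause follows by comparing $u(v_i^*, c)$ with $u(v_i^*, W)$; your handling of tie-breaking and of the withdrawn earlier proposal goes beyond what the paper checks. One small point: the improvement condition does not by itself make $c$ the round winner, since another member may submit a higher-scoring proposal in the same round, so drop that sentence and rely on the two-case split from Step~1, which already covers this case.
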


\begin{proof}
Let $W$ be the round winner if member~$i$ submits no public proposal, and $W'$ the winner if they submit $c_i = c$. Since votes are fixed, admitting $c$ leaves $\phi$-scores of all other proposals unchanged, so $W' \in \{W, c\}$. If $W' = W$, the submitter's utility is unchanged; if $W' = c$, it becomes $u(v_i^*, c)$. A rational member therefore submits only when $u(v_i^*, c)$ exceeds their utility for $W$. \qed
\end{proof}

\begin{restatable}[Sincere coalition public proposal is weakly dominant]{proposition}{propcoalitioncompromise}
\label{proposition:coalition-compromise}
For any coalition $C \subseteq [n]$ and any joint choice of public-proposal submissions $\{c_j\}_{j \in C}$, each member's weakly dominant strategy is to submit a public proposal they personally prefer to the current winner. Submitting a public proposal they do not prefer is weakly dominated.
\end{restatable}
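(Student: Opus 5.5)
The plan is to reduce the coalition statement to the single-member analysis of Proposition~\ref{proposition:compromise-beneficial}, applied member by member while the submissions of all other members (inside and outside $C$) are held fixed. Fix a coalition $C$, a member $j \in C$, and an arbitrary profile of public-proposal submissions by everyone other than $j$. Because votes are sealed and immutable within the epoch, the utility vectors $u(p)$, and hence the scores $u_\phi(p)$, of every proposal already in $P^r$ do not depend on $j$'s choice. Member~$j$ therefore controls only one thing: whether a single additional element $c_j$ joins the proposal set. Let $W$ be the round winner (or the status quo $s$) when $j$ submits nothing. By the same argument as in Proposition~\ref{proposition:compromise-beneficial}, the winner with $c_j$ submitted is $W' \in \{W, c_j\}$. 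The one point needing care is ties: I will assume the constitutional tie-break is a fixed rule over $P^r$, so adding $c_j$ can only switch the winner to $c_j$ itself.

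\emph{Step 1: sincere submission is weakly beneficial.} Suppose $u(v_j^*, c_j) > u(v_j^*, W)$, i.e.\ $j$ personally prefers $c_j$ to the current winner; when $W=s$ this reads $u(v_j^*, c_j)>0$, which is exactly proposer-preference. Then the outcome is either $W$, leaving $j$'s utility unchanged, or $c_j$, strictly raising it. In both cases the result is weakly better than abstaining, and this holds for every configuration of the others' submissions.

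\emph{Step 2: insincere submission is weakly dominated.} Suppose instead $u(v_j^*, c_j) \le u(v_j^*, W)$. Then the outcome $W' \in \{W, c_j\}$ gives $j$ utility at most $u(v_j^*, W)$, so abstaining, or submitting a preferred proposal, does at least as well. Since this comparison holds pointwise over all configurations of the other members' submissions, including those of $j$'s coalition partners, the insincere strategy is weakly dominated. Admissibility conditions only shrink the set of feasible submissions, so they do not affect the comparison. Note also that proposer-preference already excludes $u(v_j^*, c_j)\le 0$.

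\emph{Main obstacle: coalition interaction.} Partners' submissions could make $W$ itself a partner's proposal, and the improvement condition ties admissibility of $c_j$ to the previous round's winning score. I handle the first by noting that Steps~1 and~2 quantify over all fixed partner submissions. For the second, I do not claim the dominant action is always feasible; I only claim that among admissible actions, the preferred-submission-or-abstain choice dominates. Across rounds, the same argument applies round by round, since the votes, and hence all scores, never change within the epoch.
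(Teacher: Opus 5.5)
Your proposal is correct and follows essentially the same route as the paper. Because votes are sealed, $j$'s submission leaves all other scores unchanged, so the winner either stays the same or becomes $c_j$, and a non-preferred $c_j$ can therefore only make $j$ weakly worse off; your explicit handling of tie-breaking, admissibility, and ``current winner'' as the winner absent $j$'s submission makes precise what the paper's proof leaves implicit.
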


\begin{proof}
Each submission is evaluated independently: $\phi$ depends only on the fixed sincere votes, not on whose slot submitted the public proposal. The joint outcome is the proposal with maximum $u_\phi$ among all submissions. If member~$j$ submits $c_j$ with $u(v_j^*, c_j)$ not exceeding their utility for the current winner, the submission either has no effect (someone else's public proposal wins, or no public proposal wins) or causes $c_j$ to win, making $j$ weakly worse off. \qed
\end{proof}

\begin{restatable}[Coalition flip requires self-defeating member]{proposition}{propcoalitionflip}
\label{proposition:coalition-flip}
Let coalition $C \subseteq [n]$ jointly misreport, and suppose the misreport flips a proposal $W$ from unsupported under sincere voting to supported. Then at least one coalition member $m \in C$ has sincere utility $u(v_m^*, W) \le 0$: the coalition must include a member who weakly prefers the status quo to $W$.
\end{restatable}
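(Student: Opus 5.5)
The plan is to generalise the counting argument of Lemma~\ref{lemma:flip-unsupported} from one misreporter to a coalition, arguing by contrapositive. Let $k = \lceil \sigma n \rceil$. Recall that $W$ is supported if and only if at least $k$ entries of the utility vector $u(W)$, computed on the reported votes, are strictly positive. The key observation is that a joint misreport by $C$ changes only the entries indexed by $C$. The entries $u(v_j, W)$ for $j \notin C$ are the same under sincere voting and under the misreport.

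Suppose, for contradiction, that every $m \in C$ has $u(v_m^*, W) > 0$. Then under sincere voting every coalition entry of $u(W)$ is already strictly positive. The number of positive entries under sincere voting is therefore
\[ |C| + \#\{j \notin C : u(v_j, W) > 0\}. \]
Under the misreport, the coalition entries can contribute at most $|C|$ positive entries, while the non-coalition count is unchanged. So the number of positive entries under the misreport is at most the sincere count. Since $W$ was unsupported under sincere voting, the sincere count is at most $k-1$, and hence so is the misreport count. This means $W$ remains unsupported, contradicting the hypothesis that the misreport flipped it to supported.

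Therefore some $m \in C$ must have $u(v_m^*, W) \le 0$, i.e.\ weakly prefers the status quo $s$ to $W$. Put directly: the count can rise only through coalition entries that move from non-positive to positive, so at least one such member must exist.

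No step looks genuinely hard. The one point to check is that the support predicate depends only on the count of positive entries of the reported utility vector. This holds by the definition of supermajority public support and does not depend on $\phi$, on the dimension, or on public proposals. The statement is also purely per-round, so no protocol dynamics enter.
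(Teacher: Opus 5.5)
Your argument is correct and is essentially the paper's: both rest on the fact that the non-coalition entries of $u(W)$ are unchanged, so the count of positive entries can rise only through coalition members. The paper argues directly rather than by contradiction, writing the change in the count as the number of coalition members with positive reported utility minus the number with positive sincere utility. This gives the slightly sharper conclusion, matching your closing remark, that some member has $u(v_m^*, W) \le 0$ yet $u(\tilde v_m, W) > 0$.
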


\begin{proof}
Let $k = \lceil \sigma n \rceil$ and let $p_{\text{sincere}}$, $p_{\text{misreport}}$ denote the number of positive entries of $u(W)$ under each scenario. Non-coalition entries are unchanged, so $p_{\text{misreport}} - p_{\text{sincere}} = |M_C| - |S_C|$, where $S_C = \{j \in C : u(v_j^*, W) > 0\}$ and $M_C = \{j \in C : u(\tilde v_j, W) > 0\}$. A flip from $p_{\text{sincere}} \le k-1$ to $p_{\text{misreport}} \ge k$ requires $|M_C| - |S_C| \ge 1$, hence some $m \in M_C \setminus S_C$ has $u(v_m^*, W) \le 0$ and $u(\tilde v_m, W) > 0$. \qed
\end{proof}

A coalition flip of an unsupported proposal therefore must include a member acting against their sincere preference. Without side payments or external enforcement, this member is individually irrational, and the coalition is not self-enforcing.

\section{1D Characterisation of the Generalised Median}
\label{appendix:1d-characterisation}

\begin{restatable}[1D characterisation]{proposition}{proponedcharacterisation}
\label{proposition:1d-characterisation}
Let $k = \lceil \sigma n \rceil$, and order the votes as $v_{(1)} \le \cdots \le v_{(n)}$. The generalised median utility of a proposal $p \in X$ is
\[
\phi_\sigma(u(p)) = \begin{cases}
u(v_{(n-k+1)}, p) & \text{if } p > s, \\
u(v_{(k)}, p) & \text{if } p < s, \\
0 & \text{if } p = s.
\end{cases}
\]
Among supported proposals in $V$, the winner is the one closest to the corresponding positional voter. For $\sigma = 1/2$ with odd $n$, the two positional voters coincide at the median $v_m$, and the winner is the proposal in $V$ closest to $v_m$ (which equals $v_m$ itself when $v_m \in V$ and $v_m \neq s$).
\end{restatable}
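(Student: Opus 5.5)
The plan is to prove the formula by a monotonicity argument in the voter's position, and then read off the winner from it. Fix a proposal $p\in X$ and consider the single-variable function $g_p(q):=u(q,p)=|q-s|-|q-p|$ for $q\in\mathbb{R}$.

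\emph{Step 1: monotonicity of $g_p$.} If $p>s$, I would split into three pieces:
\begin{itemize}
\item for $q\le s$, $g_p(q)=(s-q)-(p-q)=-(p-s)$;
\item for $s\le q\le p$, $g_p(q)=(q-s)-(p-q)=2q-s-p$;
\item for $q\ge p$, $g_p(q)=(q-s)-(q-p)=p-s$.
\end{itemize}
So $g_p$ is non-decreasing in $q$. By the mirror-image computation, or by the reflection $q\mapsto -q$, $s\mapsto -s$, $p\mapsto -p$, $g_p$ is non-increasing when $p<s$. If $p=s$, then $g_p\equiv 0$, which gives the third case of the formula immediately.

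\emph{Step 2: locating the order statistic.} Take the votes in sorted order $v_{(1)}\le\cdots\le v_{(n)}$. If $g_p$ is non-decreasing, then $g_p(v_{(1)})\le\cdots\le g_p(v_{(n)})$. The $k$-th largest entry of $u(p)$ is then $g_p(v_{(n-k+1)})$; ties cause no problem, because the order statistic depends only on the multiset of values. If $g_p$ is non-increasing, the list is reversed and the $k$-th largest entry is $g_p(v_{(k)})$. This yields the displayed formula. The main care point is this index bookkeeping, in particular which tail of the order the $k$-th largest value falls in. I would double-check it on the case $k=n$: the score should then be the minimum utility, namely $g_p(v_{(1)})$ when $p>s$, and the formula gives exactly that.

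\emph{Step 3: identifying the winner.} Let $v^\ast$ denote the positional voter corresponding to the side of $s$ on which $p$ lies. By Step 2 the score of $p$ is $|v^\ast-s|-|v^\ast-p|$. On a given side, $|v^\ast-s|$ is a constant, so the score is maximised exactly by minimising $|v^\ast-p|$; that is, the winner is the supported proposal closest to $v^\ast$.

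For $\sigma=1/2$ and odd $n$, we have $k=(n+1)/2$ and hence $n-k+1=k$, so the two positional voters coincide at $v_m$. The score of every proposal $p$, whichever side of $s$ it lies on, is then the single expression $u(v_m,p)=|v_m-s|-|v_m-p|$. Note that this also holds at $p=s$, where it equals $0$.

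The winner is therefore the proposal in $V$ closest to $v_m$. If $v_m\ne s$, then $p=v_m\in V$ attains the value $|v_m-s|>0$. It is therefore supported, since for the generalised median a positive score is equivalent to being supported. It is also the unique maximiser, because any other $p$ has $|v_m-p|>0$ and so a strictly smaller score.
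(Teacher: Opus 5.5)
Your argument is correct and follows essentially the same route as the paper's proof: monotonicity of $q\mapsto u(q,p)$ on each side of $s$, the index bookkeeping for the $k$-th largest entry, and maximisation of $|v^\ast-s|-|v^\ast-p|$, with your piecewise formula for $g_p$ supplying the monotonicity that the paper only asserts. One point that both you and the paper leave implicit is that your Step~3 only compares proposals on the same side of $s$: for $\sigma>1/2$ or odd $n$ you have $v_{(n-k+1)}\le v_{(k)}$, and since a supported $p>s$ needs $v_{(n-k+1)}>(s+p)/2>s$ while a supported $p<s$ needs $v_{(k)}<s$, all supported proposals lie on one side; for even $n$ at $\sigma=1/2$, however, both sides can be supported (e.g.\ $n=2$, $s=0$, $V=\{-1,3\}$, where each peak coincides with its own positional voter yet $3$ strictly wins), so there the ``closest'' claim holds only side by side.
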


\begin{proof}
For fixed $p$ and $s$, the utility $u(v_q, p) = |v_q - s| - |v_q - p|$ is non-decreasing in $v_q$ when $p > s$ and non-increasing when $p < s$. Hence the ordering of utilities by member index tracks (respectively reverses) the ordering of ideal elements by magnitude, and the $k$-th largest utility equals $u(v_{(n-k+1)}, p)$ when $p > s$ and $u(v_{(k)}, p)$ when $p < s$. In either direction, the resulting expression is of the form $|v_m - s| - |v_m - p|$ for the appropriate positional voter, maximised over $V$ at the proposal closest to $v_m$; if $v_m \in V$ and $v_m \neq s$, this proposal is $v_m$ itself, with $\phi_\sigma = |v_m - s| > 0$. \qed
\end{proof}

\begin{example}[1D dominance in the running cooperative]
\label{example:freelancer-1d}
A constitutional amendment in the running example sets the commission rate at $\sigma = 1/2$. The current rate is $s = 20\%$. Members vote $v_1 = 10\%, v_2 = 15\%, v_3 = 18\%, v_4 = 22\%, v_5 = 25\%$; the positional median is $v_3 = 18\%$. By Proposition~\ref{proposition:1d-dominance}, no member can profit from misreporting. For instance, member~1 ($v_1^* = 10\%$) shifting their report leftward to $5\%$ leaves the median at $18\%$; shifting rightward to $19\%$ shifts the median to $v_4 = 22\%$, farther from the true ideal. Sincere voting weakly dominates.
\end{example}

\section{Heuristic P: Definition and Complexity}
\label{appendix:heuristic-details}

\begin{definition}[Heuristic P]
\label{definition:heuristic-p}
Given proposal set $P^r$ at the start of a round, status quo $s$, and aggregation rule $\phi$:
\begin{enumerate}
    \item For each pair $(p, q) \in P^r \times P^r$ with $p \neq q$, compute a bounded set $C(p, q) \subseteq X$ of candidate combinations of $p$ and $q$ (in metric spaces with unique geodesic midpoints, $C(p, q)$ is the singleton midpoint; otherwise $C(p, q)$ enumerates or samples a bounded number of tie-breaking candidates).
    \item Let $c^* \in \arg\max\{\phi(u(c)) : c \in \bigcup_{p, q} C(p, q)\}$.
    \item If $\phi(u(c^*)) > \max_{p \in P^r} \phi(u(p))$, return $c^*$ as a public proposal; otherwise return $\bot$.
\end{enumerate}
\end{definition}

\begin{proposition}[Complexity of Heuristic P]
\label{proposition:heuristic-complexity}
Heuristic P runs in $O(|P^r|^2 (T_c + n))$ time, where $T_c$ bounds the cost of computing the candidate combinations for one pair: $O(m)$ for the simplex under Euclidean distance, $O(m^2)$ for permutations under swap distance, $O(|A|)$ for subsets under symmetric difference (bounded candidate set), and $O(\ell^2)$ for strings under weighted Levenshtein. Heuristic P is polynomial in every setting of Section~\ref{section:metrics}.
\end{proposition}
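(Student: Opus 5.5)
The plan is to bound the running time by a direct cost count over the three steps of Definition~\ref{definition:heuristic-p}. Afterwards we instantiate $T_c$ in each setting of Section~\ref{section:metrics}.

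\emph{Cost of one pair.} Step~1 iterates over at most $|P^r|^2$ ordered pairs. For each pair $(p,q)$ it spends $T_c$ time producing $C(p,q)$. By definition $C(p,q)$ has a bounded number of candidates, say at most a constant $K$. Each candidate $c$ then needs its score $\phi(u(c))$. This costs $n$ distance evaluations to form $u(c)$, plus one evaluation of $\phi$, which is $O(n)$ for the generalised median via linear-time selection and for the mean via summation.

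\emph{Absorbing the distance cost.} A distance evaluation is not $O(1)$ in the structured spaces, so we fold it into $T_c$. Concretely:
\begin{itemize}
\item Simplex: a midpoint and a Euclidean distance each cost $O(m)$.
\item Permutations: a swap distance costs $O(m\log m)$ by inversion counting, and a candidate combination costs $O(m^2)$ by bubble-sorting $p$ halfway towards $q$.
\item Subsets: symmetric difference and set combinations cost $O(|A|)$.
\item Strings: weighted Levenshtein by dynamic programming costs $O(\ell^2)$, and a midpoint string is read off an optimal alignment in $O(\ell^2)$.
\end{itemize}
With this convention the per-candidate work is $O(T_c + n)$, and the total for Step~1 plus scoring is $O(|P^r|^2 K(T_c+n)) = O(|P^r|^2(T_c+n))$.

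\emph{Steps 2 and 3.} Step~2 is a maximum over $O(|P^r|^2)$ scores, which is dominated by the above. Step~3 compares against $\max_{p\in P^r}\phi(u(p))$, costing $O(|P^r|(n\cdot\mathrm{dist}+n))$, which is also dominated.

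\emph{Polynomiality.} Since $|P^r|\le 2n$, the bound becomes $O(n^2(T_c+n))$ in every setting. Each listed $T_c$ is polynomial in $m$, $|A|$ or $\ell$, so the whole is polynomial. For the discrete and one-dimensional settings, midpoints and distances are $O(1)$.

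\emph{Main obstacle.} The hardest step is the bookkeeping of the per-candidate distance cost, which multiplies by $n$. Strictly, scoring a candidate costs $n$ distance computations. A fully rigorous bound would therefore read $O(|P^r|^2(T_c + n\,T_d))$, where $T_d$ is the cost of one distance computation. The stated form holds only under the convention that $T_c$ absorbs $T_d$ (or that distances to candidates are precomputed). I would state this convention explicitly and note that polynomiality is unaffected either way.
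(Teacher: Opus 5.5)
\textbf{The gap.} The step that fails is your claim that, once a distance evaluation is folded into $T_c$, the per-candidate work is $O(T_c+n)$. Scoring a candidate $c$ needs the whole vector $u(c)$, that is, the $n$ distances $d(v_i,c)$. (The values $d(v_i,s)$ can be cached for the epoch.) So scoring costs $n\,T_d+O(n)$, where $T_d$ is the cost of one distance. Knowing $T_d\le T_c$ only bounds this by $O(n\,T_c)$, not by $O(T_c+n)$.

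Your closing paragraph spots the issue, but neither rescue works. Absorbing one $T_d$ into $T_c$ still leaves $n$ of them. ``Precomputing'' the voter-to-candidate distances costs exactly the $n\,T_d$ per candidate you want to hide, because the candidates are in general new points. With the listed values of $T_c$, a direct count therefore gives $O(|P^r|^2(T_c+n\,T_d))$:
\begin{itemize}
\item $O(|P^r|^2(m^2+nm\log m))$ for permutations;
\item $O(|P^r|^2 n|A|)$ for subsets;
\item $O(|P^r|^2 n\ell^2)$ for strings.
\end{itemize}
None of these is bounded by the stated $O(|P^r|^2(T_c+n))$. Your per-setting justifications of the $T_c$ values, and your accounting of Steps~2--3, are fine.

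\textbf{What to do instead.} The paper gives no proof of this proposition, so there is no hidden argument to appeal to. What your count does establish is $O(|P^r|^2(T_c+n\,T_d))=O(n^2(T_c+n\,T_d))$. This already yields the polynomiality claim, which is the substantive part of the statement, and you should present this bound rather than assert the sharper one.

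The sharper form can be proved in the Euclidean settings (2D and the simplex), but it needs a different device, not bookkeeping:
\begin{itemize}
\item Precompute the Gram matrix $\langle x,y\rangle$ over $P^r\supseteq V$, in $O(|P^r|^2m)$ time.
\item Then $\|v_i-\tfrac{p+q}{2}\|^2=\langle v_i,v_i\rangle-\langle v_i,p\rangle-\langle v_i,q\rangle+\tfrac14(\langle p,p\rangle+2\langle p,q\rangle+\langle q,q\rangle)$ costs $O(1)$ per voter.
\item Hence each pair costs $O(m+n)$, and the stated bound holds with $T_c=O(m)$.
\end{itemize}
For swap distance, symmetric difference and weighted Levenshtein there is no evident decomposition of $d(v_i,c)$ into precomputed pairwise quantities. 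There the bound should be stated with $n$ replaced by $n\,T_d$, or equivalently with $T_c$ redefined to include scoring the candidates.
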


\section{Compromise Gap: Minimal Examples}
\label{appendix:gap-examples}

Two minimal examples illustrate that in dimensions two and above, the gap can be positive.

\begin{example}[Star graph]
\label{example:star}
Let $X$ be a star with three leaves $\ell_1, \ell_2, \ell_3$ each at distance $1$ from a hub $h$, with the shortest-path metric. Take $V = \{\ell_1, \ell_2, \ell_3\}$, $s = \ell_1$, $\sigma = 1/2$, so $k = \lceil 3/2 \rceil = 2$. The utility vector at $h$ is $u(h) = (-1, 1, 1)$ with $\phi_\sigma(u(h)) = 1$. For each peak $p \in V$, direct computation gives $\phi_\sigma(u(p)) = 0$ (with $u(\ell_1) = (0, 0, 0)$, $u(\ell_2) = (-2, 2, 0)$, $u(\ell_3) = (-2, 0, 2)$). Hence $\text{OPT} = 1$ at $h \notin V$, $\text{PEAK} = 0$, and $\text{CG} = 1$.
\end{example}

\begin{example}[Two-dimensional Euclidean]
\label{example:2d}
Proposition~\ref{proposition:multidim-not-sp} exhibits $n = 3$ peaks in $\mathbb{R}^2$ for which no peak is supported and the status quo is retained ($\text{PEAK} = 0$ in our terminology, after enforcing positivity), yet the non-peak point $(0.5, 0.5)$ has positive median utility and is supported. Hence $\text{CG} > 0$.
\end{example}

\begin{example}[Compromise gap in the running cooperative]
\label{example:freelancer-gap}
The budget profile of Example~\ref{subsection:running-example} has $\text{CG} > 0$. The five peaks score below the geometric-median public proposal $\hat c = (0.34, 0.40, 0.27)$: direct computation gives $\phi_{1/2}(u(\hat c)) \approx 0.171$, while no peak exceeds $\approx 0.140$ on the same profile. The public-proposal channel closed the gap; without it, the rule would have adopted the highest-scoring peak with score $\approx 0.140$, leaving $\text{CG} \approx 0.031$ unclaimed. This is the gap that the open public-proposal channel exists to close.
\end{example}

In both examples, an interior point of $X$ strictly dominates every peak.

\section{Remaining Application Settings}
\label{appendix:settings}

This appendix gives full propositions, proofs, and worked examples for four of the five application settings summarised in Section~\ref{subsection:other-settings}: plurality elections, social welfare functions, committee elections, and legislation. (1D rate-setting and simplex budgeting are worked in the main text; constitutional amendments are detailed in Appendix~\ref{appendix:amendment}.)

\subsection{Plurality Elections: Electing Officers}
\label{appendix:plurality}

A community must periodically elect officers---a chairperson, treasurer, or ombudsperson. Each member nominates a candidate; at the end of a term, the position is vacant.

\mypara{Model} Let $A$ be the set of candidates and $\bot$ denote the vacant position. The metric space is $X = A \cup \{\bot\}$ with the discrete metric: $d(x, y) = 1$ for $x \neq y$, and $d(x, x) = 0$. Each member's ideal element is one of the candidates. The status quo $s = \bot$ is the vacant position.

\mypara{Utility} Since $s = \bot$ and no member's ideal element is $\bot$, every member has $d(v_q, s) = 1$. For a member with ideal element $v_q$ and a candidate $p \in A$:
\[ u(q, p) = d(v_q, s) - d(v_q, p) = \begin{cases} 1 & \text{if } v_q = p, \\ 0 & \text{otherwise.} \end{cases} \]

\mypara{Characterisation} For a candidate $p \in A$, let $n_p = |\{q : v_q = p\}|$ be the number of supporters of $p$.

\begin{restatable}{proposition}{propplurality}
\label{proposition:plurality}
The constitutional governance rule selects the candidate with the most supporters, provided that candidate has $\sigma$-supermajority support, i.e., at least $\lceil \sigma n \rceil$ supporters. Otherwise no candidate is adopted and the position remains vacant.
\end{restatable}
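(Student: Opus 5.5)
The plan is to compute the generalised median score of each proposal in closed form from the binary utility vector, and then read off which proposal the rule adopts. Under the discrete metric with $s = \bot$, the utility formula just derived shows that every candidate $p \in A$ has a $0/1$ utility vector with exactly $n_p$ ones. The score $\phi_\sigma(u(p))$ is the $k$-th largest entry, with $k = \lceil \sigma n \rceil$. So $\phi_\sigma(u(p)) = 1$ if $n_p \ge k$ and $0$ otherwise. The status quo $\bot$ has the all-zero vector and score $0$. Since the proposal set $P^r$ is contained in $X = A \cup \{\bot\}$, these two cases cover every proposal, whether it is a personal or a public proposal. I will also note that, by the equivalence stated after Definition~\ref{definition:median}, support coincides with positivity of $\phi_\sigma$: a candidate $p$ is supported iff $n_p \ge k$.

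The case split then follows directly from Definition~\ref{definition:rule}. If no candidate has $n_p \ge k$, then no supported proposal has positive score, the round has no winner, and by two-round quiescence the outcome is $s = \bot$, so the position stays vacant. If some candidate has $n_p \ge k$, then every such candidate is supported and attains the maximal score $1$, and the rule adopts one of them. It remains to show that the adopted candidate is the one with the most supporters. Because $\sigma \ge 1/2$, we have $k \ge n/2$. Two distinct candidates with $n_p, n_{p'} \ge k$ would require $n_p + n_{p'} \le n$, which forces $n_p = n_{p'} = n/2$ with $n$ even and $\sigma = 1/2$. Outside this edge case, the qualifying candidate is unique and is therefore the plurality winner. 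In the edge case the two candidates tie both in support count and in score, and the constitutional tie-break chooses between them, matching the footnote on even $n$.

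The main obstacle is that $\phi_\sigma$ is coarse: all qualifying candidates score exactly $1$, so the statement's phrase ``most supporters'' cannot come from score comparison alone. It has to come from the counting argument above, which shows that at most one candidate qualifies except in the exact $n/2$--$n/2$ tie. I would make this step explicit. I would also remark that a public proposal can only be some $p \in A$, whose score is fixed by $n_p$. Such a proposal therefore cannot change the outcome, which rules out any gap between per-round adoption and the epoch outcome.
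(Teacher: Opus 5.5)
Your proof is correct and follows the paper's argument: each candidate's utility vector is a $0/1$ vector with $n_p$ ones, so its $\lceil\sigma n\rceil$-th largest entry is $1$ if $n_p \ge \lceil\sigma n\rceil$ and $0$ otherwise. Your counting step adds something the paper's one-line proof leaves implicit. Supporter sets are disjoint and $\lceil\sigma n\rceil \ge n/2$, so at most one candidate qualifies, except for the $n/2$--$n/2$ tie at $\sigma = 1/2$. This is needed because every qualifying candidate scores exactly $1$, so ``the one with the most supporters maximises the median'' is true only weakly and does not by itself single out that candidate.
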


\begin{proof}
For a candidate $p$, the $n$ utilities consist of $n_p$ values of $1$ and $n - n_p$ values of $0$. The $\lceil \sigma n \rceil$-th largest utility equals $1$ iff $n_p \ge \lceil \sigma n \rceil$, and $0$ otherwise. Among all candidates, the one with the most supporters maximises the median; it is adopted precisely when it meets the gate. \qed
\end{proof}

\begin{example}
A cooperative of five members elects a chairperson with $\sigma = 1/2$. The candidates are Alice, Bob, and Carol. If two members prefer Bob, two prefer Alice, and one prefers Carol, no candidate has $\lceil 5/2 \rceil = 3$ supporters; the position remains vacant. If three prefer Bob and the rest are split, Bob is elected.
\end{example}

If no candidate achieves the threshold, the public-proposal channel admits compromise candidates---e.g., a coalition-suggested candidate---which are scored under the same rule.

\subsection{Social Welfare Functions: Ranking Priorities}
\label{appendix:swf}

A community must rank its strategic priorities---growth, member welfare, service quality, sustainability---to guide resource allocation. Each member proposes a ranking; the current ranking is the status quo.

\mypara{Model} Let $A$ be the set of priorities, $m = |A|$. The metric space is $X = \mathbb{S}_A$ (permutations over $A$) with the swap distance (minimum number of adjacent transpositions to convert one permutation into another). The status quo $s \in \mathbb{S}_A$ is the current ranking.

\mypara{Utility} For member with ideal element $v_q$ and a proposed ranking $p$:
\[ u(q, p) = d(v_q, s) - d(v_q, p) . \]

\mypara{Characterisation}

\begin{restatable}{proposition}{propswf}
\label{proposition:swf}
The constitutional governance rule selects the supported ranking in $V \cup C$ with maximal score in $O(n^2 m \log m)$ time.
\end{restatable}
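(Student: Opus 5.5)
This is essentially a complexity count layered on the general per-round bound from the Complexity paragraph of Section~\ref{section:framework_properties}: each round costs $O(|P^r| \cdot T_\phi(n))$ plus the cost of forming utility vectors. Here $V \cup C$ denotes the round's proposal set $P^r$, with $C$ the active public proposals, so $|V \cup C| \le 2n$.

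\textbf{Correctness.} Correctness is inherited directly from Definition~\ref{definition:rule} and Theorem~\ref{theorem:condorcet}. Every proposal in $V \cup C$ gets a real score $\phi_\sigma(u(p))$. Under the generalised median, positivity of the score coincides with supermajority support (the equivalence noted after Definition~\ref{definition:median}). So the rule reduces to three steps:
\begin{enumerate}
\item compute $\phi_\sigma(u(p))$ for each $p \in V \cup C$;
\item take a maximiser;
\item adopt it if its score is positive, and otherwise retain $s$.
\end{enumerate}
Nothing specific to permutations is needed for correctness. The content of the proposition is the running time.

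\textbf{Counting the cost.} I would organise the count as follows.
\begin{itemize}
\item Precompute $d(v_q, s)$ for every member $q$. That is $n$ swap-distance computations.
\item For each of the at most $2n$ proposals $p$ and each of the $n$ members $q$, compute $d(v_q, p)$. That is $O(n^2)$ swap-distance evaluations.
\item Each utility vector then follows by subtraction in $O(n)$, and $\phi_\sigma$ is a selection of the $\lceil \sigma n \rceil$-th largest entry. By linear-time selection (or sorting in $O(n\log n)$, absorbed below) this costs $O(n)$ per proposal.
\item Taking the maximum over proposals costs $O(n)$ more.
\end{itemize}
The dominant term is therefore $O(n^2) \cdot T_{\mathrm{swap}}(m)$.

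\textbf{Main obstacle: the swap-distance bound.} The step that needs care is showing $T_{\mathrm{swap}}(m) = O(m \log m)$. The swap distance (Kendall tau) between two permutations equals the number of inversions of one relative to the other. To compute it:
\begin{enumerate}
\item Relabel $A$ by the positions in $v_q$, an $O(m)$ pass using an inverse-position array.
\item Write $p$ as a sequence of these labels.
\item Count inversions of that sequence by merge sort, or with a Fenwick tree, in $O(m \log m)$.
\end{enumerate}
I would state this as a standard lemma and cite the classical merge-sort inversion count. (An $O(m\sqrt{\log m})$ method exists but is unnecessary.) Combining the pieces gives $O(n^2 m \log m)$ total. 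This matches the entry in Table~\ref{table:applications} and instantiates the generic bound with $|P^r| \le 2n$.
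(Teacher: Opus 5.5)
Your proposal is correct and follows the paper's argument: you compute swap distance as an inversion count in $O(m\log m)$, evaluate $\phi_\sigma$ in $O(n)$ per proposal, and sum over $|P^r|\le 2n$ proposals to get $O(n^2 m\log m)$. One small slip: your parenthetical claim that $O(n\log n)$ sorting is ``absorbed'' fails when $n$ is large relative to $m$, since it gives $O(n^2\log n)$ overall, so rely on linear-time selection as your main line (and the paper) does.
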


\begin{proof}
For each proposal $p \in P^r$, the swap distance $d(v_q, p)$ equals the number of inversions between the two permutations and is computable in $O(m \log m)$ time. Computing utilities for all $n$ members takes $O(n m \log m)$, evaluating $\phi_\sigma$ takes $O(n)$, so per-proposal cost is $O(n m \log m)$. Over $|P^r| \le 2n$ proposals, the total is $O(n^2 m \log m)$. \qed
\end{proof}

\mypara{Comparison with Bulteau et al.} Bulteau et al.'s $L_p$ aggregation here corresponds to the Kemeny ranking, which is NP-hard for all $p \ge 1$. The constitutional governance rule trades optimality over the full permutation space for tractability over the proposal set; the Kemeny ranking, computed as a public proposal, is admitted by the framework on the same footing as any other public proposal.

\begin{example}
Three members rank priorities $A = \{g, w, q\}$ (growth, welfare, quality) at $\sigma = 1/2$. The current ranking is $s : g \succ w \succ q$. The proposed rankings are $v_1 = v_2 = s$ and $v_3 = w \succ g \succ q$ (one swap from $s$). The proposal $v_3$ has utility vector $(-1, -1, 1)$, with median $-1$. The current ranking has all-zero utilities. No proposal is supported with positive median; the status quo is retained, reflecting that two members already prefer it.
\end{example}

\subsection{Committee Elections: Selecting a Board}
\label{appendix:committee}

A community must select a governance board, a dispute panel, or an audit committee from among its members.

\mypara{Model} Let $A$ be the set of eligible members. The metric space is $X \subseteq 2^A$ (subsets of $A$) with $d(x, y) = |x \triangle y|$, the symmetric difference. As specific cases: $X = 2^A$ for unrestricted board size, or $X = \binom{A}{k}$ for a board of exactly $k$ members. The status quo $s = \emptyset$ is the empty set (no board); the constitution may specify continuity mechanisms between terms.

\mypara{Utility}
\[ u(q, p) = |v_q \triangle s| - |v_q \triangle p| . \]

\mypara{Characterisation}

\begin{restatable}{proposition}{propcommittee}
\label{proposition:committee}
The constitutional governance rule selects the supported board with maximal score in $O(n^2 |A|)$ time.
\end{restatable}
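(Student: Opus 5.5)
The plan follows the complexity arguments given for the simplex and for Proposition~\ref{proposition:swf}. The idea is to bound the cost of scoring one proposal and then sum over the proposal set $P^r$, using the bound $|P^r| \le 2n$ from the definition of the round-$r$ proposal set.

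\emph{Representation.} Subsets of $A$ are stored as characteristic bit-vectors of length $|A|$. This fixes the ground set once, with $O(|A|)$ storage per element.

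\emph{Per-distance cost.} Computing $d(x,y) = |x \triangle y|$ is a single coordinatewise XOR followed by a count, which takes $O(|A|)$ time. Each member's baseline $d(v_q, s)$ with $s = \emptyset$ is simply $|v_q|$. These $n$ baselines can be precomputed once per epoch in $O(n|A|)$ total, or recomputed each time, since the cost is dominated anyway.

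\emph{Per-proposal cost.} For a fixed $p \in P^r$, we compute the utility vector $u(p)$ with entries $u(q,p) = |v_q \triangle s| - |v_q \triangle p|$. This is $n$ distance computations, so $O(n|A|)$ in total.

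\emph{Scoring.} Evaluating $\phi_\sigma$ means selecting the $\lceil \sigma n \rceil$-th largest entry of $u(p)$. A linear-time selection algorithm does this in $O(n)$; sorting in $O(n\log n)$ would also be acceptable, since $n \log n \le n|A|$ is not needed as long as we use linear selection. Support needs no extra test, because under the generalised median it coincides with $\phi_\sigma(u(p)) > 0$, as noted after Definition~\ref{definition:median}.

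\emph{Choosing the winner.} A single pass over the scores picks the maximal positive score and applies the tie-break, in $O(|P^r|)$ time.

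\emph{Total.} Summing the per-proposal cost over at most $2n$ proposals gives $O(2n \cdot (n|A| + n)) = O(n^2|A|)$. Correctness, meaning that the rule indeed adopts the supported board of maximal score, follows directly from Definition~\ref{definition:rule} and Theorem~\ref{theorem:condorcet}.

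There is no real obstacle here; the claim is a routine accounting argument. The only points needing care are the following:
\begin{itemize}
\item The symmetric-difference cardinality must be computed in $O(|A|)$ rather than by set operations on unsorted lists. The bit-vector representation handles this.
\item For the restricted space $X = \binom{A}{k}$, membership checks of public proposals (is $|c| = k$?) also cost only $O(|A|)$, so they do not change the bound.
\end{itemize}
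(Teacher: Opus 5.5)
Your proposal is correct and follows the paper's own accounting: $O(|A|)$ per symmetric difference, hence $O(n|A|)$ per proposal (which absorbs the $O(n)$ selection for $\phi_\sigma$), summed over $|P^r| \le 2n$ proposals to give $O(n^2|A|)$; the bit-vector and linear-selection details only make explicit what the paper leaves implicit. One slip: your aside that sorting ``would also be acceptable'' is not right in general, since it adds an $O(n^2\log n)$ term and $n$ is not bounded in terms of $|A|$ (members may share ideal elements), so rely on linear-time selection, or on counting sort, since the utilities are integers in $[-|A|,|A|]$.
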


\begin{proof}
For each proposal $p \in P^r$, the symmetric difference $|v_q \triangle p|$ is computable in $O(|A|)$ time. Per-proposal cost is therefore $O(n |A|)$, and over $|P^r| \le 2n$ proposals the total is $O(n^2 |A|)$. \qed
\end{proof}

\mypara{Comparison with Bulteau et al.} Bulteau et al.'s Condorcet aggregation is coNP-hard, and $L_p$ for $p > 1$ is NP-hard, in this setting. The framework's restriction to $P^r$ is again the source of tractability.

\begin{example}
Five members ($A = \{a, b, c, d, e\}$) form a three-member governance board at $\sigma = 1/2$, with $s = \emptyset$. Proposed boards: $v_1 = v_2 = \{a, b, c\}$, $v_3 = \{a, c, d\}$, $v_4 = \{b, c, d\}$, $v_5 = \{a, b, d\}$. The framework evaluates each proposal by its generalised median utility relative to the status quo of no board, and selects the supported board with the highest score.
\end{example}

\subsection{Legislation: Drafting Bylaws}
\label{appendix:legislation}

A community must draft and amend its bylaws---the rules governing membership, decision-making, dispute resolution, and surplus distribution. Each member may propose a text; the current bylaws (or the empty text, if none exist) are the status quo. The framework's metric-agnosticism (Remark~\ref{remark:semantic-metrics}) is most consequential here.

\mypara{Model} Let $\Sigma$ be an alphabet, where characters represent clauses or articles. The metric space is $X = \Sigma^*$ (strings over $\Sigma$). Following Bulteau et al.~\cite{bulteau2021aggregation}, we adopt a weighted Levenshtein distance: insert and delete operations cost $1$, swap operations cost $1/\ell^2$ where $\ell$ is the length of the longest proposed text. This weighting makes content more important than order. The status quo $s$ is the current bylaws (or the empty text).

As observed in Bulteau et al., the problem decomposes into two phases: Phase 1 selects the set of clauses (formally equivalent to committee elections, Appendix~\ref{appendix:committee}); Phase 2 orders the selected clauses (formally equivalent to social welfare functions, Appendix~\ref{appendix:swf}).

\mypara{Utility}
\[ u(q, p) = d(v_q, s) - d(v_q, p) . \]

\mypara{Characterisation}

\begin{restatable}{proposition}{proplegislation}
\label{proposition:legislation}
The constitutional governance rule selects the supported text with maximal score in $O(n^2 \ell^2)$ time.
\end{restatable}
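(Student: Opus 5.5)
The plan follows the template of Propositions~\ref{proposition:swf} and~\ref{proposition:committee}: bound the cost of one utility evaluation, multiply by the number of utility entries needed per proposal, and add the cost of evaluating $\phi_\sigma$ over the round's proposal set $P^r$. Correctness of the selection needs no new argument. By Definition~\ref{definition:rule} and the equivalence noted after Definition~\ref{definition:median}, the rule scores every $p \in P^r$ by $\phi_\sigma(u(p))$ and adopts the maximum if it is positive; since the supermajority gate coincides with $\phi_\sigma > 0$, the selected text is exactly the supported text of maximal score. The only content of the proposition is therefore the running-time bound.

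The first step is distance computation. For each pair of texts $x, y$ with $|x|, |y| \le \ell$, the weighted Levenshtein distance (insert/delete cost $1$, adjacent swap cost $1/\ell^2$) is computed by the standard Wagner--Fischer dynamic program, extended with the Damerau transposition case. This fills an $(|x|+1)\times(|y|+1)$ table with $O(1)$ work per cell, so each distance costs $O(\ell^2)$. The constant weights do not affect the recurrence's structure. I would remark that the restricted (optimal-string-alignment) variant suffices if that is the version Bulteau et al.\ intend; otherwise the Lowrance--Wagner algorithm gives the unrestricted transposition distance in $O(\ell^2)$ time for a finite alphabet.

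The second step precomputes $d(v_q, s)$ for all $q$, which costs $O(n\ell^2)$ in total, assuming $|s| \le \ell$; if not, $\ell$ is taken to include the status quo's length. Then, for each $p \in P^r$, computing $u(q,p) = d(v_q,s) - d(v_q,p)$ for all $q$ costs $O(n\ell^2)$, and $\phi_\sigma$ is evaluated by linear-time selection of the $\lceil \sigma n\rceil$-th largest entry in $O(n)$. Since $|P^r| \le 2n$, the total is $O(n \cdot n\ell^2 + n\ell^2) = O(n^2\ell^2)$. The final maximisation and tie-break add $O(n)$.

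The main obstacle is the first step: confirming that the specific weighted Levenshtein metric with swaps admits an $O(\ell^2)$ dynamic program and is a genuine metric. The optimal-string-alignment variant can violate the triangle inequality, and the unrestricted variant needs the Lowrance--Wagner refinement. I would handle this by fixing the distance as defined in Bulteau et al.\ and citing the $O(\ell^2)$ algorithm that matches that definition. If the distance is only known to be polynomial, the stated bound becomes $O(n^2 T_d)$ with $T_d$ the per-pair distance cost, and it specialises to $O(n^2\ell^2)$ when $T_d = O(\ell^2)$.
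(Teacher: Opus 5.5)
Your count is the same as the paper's proof: $O(\ell^2)$ per distance, $n$ distances per proposal, and $|P^r|\le 2n$ proposals. Your extra bookkeeping is fine. The weak point is the per-pair bound. The paper states it without proof (``by dynamic programming''), and the algorithms you cite for it do not compute this distance when swaps cost $1/\ell^2$. Contrary to your remark, the weights do matter.

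\begin{itemize}
\item The Damerau-extended Wagner--Fischer recurrence computes the optimal-string-alignment distance, which is a different quantity when swaps are this cheap. From \texttt{abc} to \texttt{bca}, two adjacent swaps cost $2/\ell^2$, yet the recurrence returns $2$ (delete \texttt{a}, reinsert it).
\item Lowrance--Wagner does not fix this. Their $O(|x||y|)$ algorithm relies on the lemma that once-transposed symbols are never edited again. That lemma requires $2W_T \ge W_I + W_D$ (transposition, insertion and deletion weights), whereas here $W_T = 1/\ell^2$ and $W_I = W_D = 1$. On the same pair it also returns $2$.
\item There is also no general algorithm you could cite instead. Suppose $y$ is a sub-multiset of $x$. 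Some deletions-plus-swaps edit costs less than $|x|-|y|+1/2$, while any edit using an insertion costs at least $|x|-|y|+2$. So computing this distance on strings with repeated symbols solves string-to-string correction by deletions and adjacent interchanges, which Wagner (1975) proved NP-complete.
\end{itemize}

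The missing idea is a structure the setting already assumes. Characters are clauses, and the split into a clause-selection phase and an ordering phase presupposes that each clause occurs at most once in a text. Under that assumption:
\begin{itemize}
\item Sorting the common clauses by adjacent swaps costs at most $\binom{\ell}{2}/\ell^2 < 1/2$.
\item The numbers of insertions and deletions differ by exactly $|y|-|x|$. So any insertion or deletion beyond those of $y\setminus x$ and $x\setminus y$ comes with a matching extra one, adding at least $2$.
\item Hence an optimal edit deletes $x\setminus y$, inserts $y\setminus x$, and otherwise only swaps.
\item Each adjacent swap reverses the order of at most one pair of common clauses. Therefore $d(x,y) = |x\triangle y| + \mathrm{inv}(x,y)/\ell^2$, where $\mathrm{inv}(x,y)$ counts the pairs of common clauses ordered differently in $x$ and $y$.
\end{itemize}
This formula can be evaluated in $O(\ell^2)$ by comparing all pairs. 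With it, the rest of your bound goes through unchanged.
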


\begin{proof}
For each proposal $p \in P^r$, the weighted Levenshtein distance $d(v_q, p)$ between two strings of length at most $\ell$ is computable in $O(\ell^2)$ time by dynamic programming. Per-proposal cost is therefore $O(n \ell^2)$, and over $|P^r| \le 2n$ proposals the total is $O(n^2 \ell^2)$. \qed
\end{proof}

\mypara{Comparison with Bulteau et al.} Both phases of $L_p$ aggregation are NP-hard, and the combined problem inherits the hardness. The framework's tractability rests on restricting to the proposal set.

\mypara{On the metric} As noted in Remark~\ref{remark:semantic-metrics}, the weighted Levenshtein distance is a placeholder. The framework's metric-agnosticism means semantic distances---defined via large language models, embedding-based similarity, or domain-specific schemas---plug in directly. AI mediation~\cite{briman2025ai} provides one path: a mediator computes a semantic public proposal between proposed bylaws and submits it under the open public-proposal channel. The rule scores the result under the constitutionally-specified metric, regardless of how the public proposal was generated.

\section{Constitutional Amendment: Full Details}
\label{appendix:amendment}

Section~\ref{subsection:self-amendment} sketched constitutional amendment as the same constitutional governance rule applied at the constitutional threshold $\sigma_C$, with the $h$-rule reserved for $\sigma_C$ amending itself. This appendix gives the per-component details. We organise around the components of the legal corpus: laws, scalar metadata, the supermajority threshold, the aggregation rule, and textual provisions.

\subsection{Components of the Legal Corpus}
\label{appendix:identify}

The legal corpus is a tuple of components, each carrying its own metric space $(X_i, d_i)$, aggregation rule $\phi_i$, supermajority threshold $\sigma_i$, and novelty distance $\varepsilon_i$. The components partition into the \emph{laws}---substantive components such as a current rate, a current budget, a current board, the current bylaws---and the \emph{constitution}---comprising the per-component metadata $(X_i, d_i, \phi_i, \sigma_i, \varepsilon_i)$ for every component, the member set, and the constitutional threshold $\sigma_C$. Each component is amended by a $\sigma_i$-supermajority via Definition~\ref{definition:rule}. The metadata of any law $i$, including $\sigma_i$, is amended by a $\sigma_C$-supermajority via Definition~\ref{definition:rule}. The constitutional threshold $\sigma_C$ amending itself is the one self-referential case, handled by the $h$-rule (Appendix~\ref{appendix:threshold}). New members must consent to joining; membership is amended via per-candidate binary referenda~\cite{keidar2025constitutional}, the simplest instance of Definition~\ref{definition:rule} on $\{\text{in},\text{out}\}$.

\subsection{Amending a Scalar Parameter}
\label{appendix:scalar}

Scalar constitutional metadata---such as the per-component novelty distance $\varepsilon_i$---are amended as instances of 1D elections (Section~\ref{subsection:rate}), at the constitutional threshold $\sigma_C$, with the metric space a subset of $\mathbb{R}$ with absolute distance. Each member proposes a value; the rule selects the proposal closest to the $\lceil \sigma_C n \rceil$-th positional vote, by Proposition~\ref{proposition:1d-characterisation}.

\subsection{Amending the Threshold: The h-Rule}
\label{appendix:threshold}

The most distinctive constitutional amendment is changing $\sigma$ itself. This is inherently self-referential: the rule used to decide whether $\sigma$ should change is determined by $\sigma$. Abramowitz et al.~\cite{abramowitz2021beginning} show that a natural set of axioms---decisiveness, monotonicity, anonymity, concordance, minimality, and posterior consistency---uniquely determines the amendment process, the \emph{$h$-rule}.

The $h$-rule operates as follows. Each member $i$ states a preferred threshold $\sigma_i \in [1/2, 1)$. Preferences over thresholds are single-peaked.

\begin{enumerate}
    \item \emph{Raising $\sigma$}: $\sigma$ is increased to the maximal $\sigma' > \sigma$ for which a $\sigma'$-supermajority of members voted for a value $\ge \sigma'$. Raising the threshold requires the very supermajority that the new threshold demands.
    \item \emph{Lowering $\sigma$}: $\sigma$ is decreased to the minimal $\sigma' < \sigma$ for which a $\sigma$-supermajority voted for a value $\le \sigma'$. Lowering requires the current supermajority's agreement.
    \item Otherwise $\sigma$ remains unchanged.
\end{enumerate}

The $h$-rule prevents a minority from imposing stricter requirements or weakening the constitution; \cite{abramowitz2021beginning} establishes that it is the unique rule satisfying their axioms.

\mypara{Running example: raising the constitutional threshold} Suppose a cooperative's current constitutional threshold is $\sigma_C = 1/2$, and members propose to raise it. Members vote preferred values: $v_1 = 1/2, v_2 = 2/3, v_3 = 2/3, v_4 = 3/4, v_5 = 3/4$. Raising to $\sigma' = 3/4$ requires a $3/4$-supermajority voting for $\ge 3/4$: $\lceil (3/4) \cdot 5 \rceil = 4$ members would be needed, but only $2$ vote $\ge 3/4$. Raising to $\sigma' = 2/3$ requires a $2/3$-supermajority voting for $\ge 2/3$: $\lceil (2/3) \cdot 5 \rceil = 4$ members are needed, and $4$ ($v_2, v_3, v_4, v_5$) qualify. So $\sigma_C$ is raised to $2/3$. The example illustrates the $h$-rule's defining property: a stricter threshold cannot be imposed without the level of support it itself demands.

This setting is an instance of 1D elections on $[1/2, 1)$, but with the crucial difference that the threshold for amendment is itself the value being amended; the $h$-rule resolves the self-reference.

\subsection{Amending the Aggregation Rule}
\label{appendix:aggregation}

Each amendable component $s_i$ carries an aggregation rule $\phi_i$, part of the constitution. A community may constitutionally specify $\phi_i$ as the generalised median (Definition~\ref{definition:median}), the mean (Definition~\ref{definition:mean}), another aggregator from Bulteau et al.'s $L_p$ family, or any anonymous, monotone, $O(n)$-computable aggregator with $\phi(\mathbf{0}) = 0$. Amending $\phi_i$ proceeds under the constitutional governance rule at $\sigma_C$. Each member proposes a preferred $\phi_i$; the rule selects the round winner. A community can evolve any component's aggregation rule from generalised median to mean (or to any admissible aggregator) as collective judgement evolves.

\subsection{Amending Textual Provisions}
\label{appendix:textual-amendment}

The community's bylaws are a law. Amending them is an instance of legislation (Appendix~\ref{appendix:legislation}) at the bylaws' own $\sigma_i$. Each member proposes a revised text; the constitutional governance rule selects the proposal with maximal score, adopting it only if it is supported.

\subsection{Ratification}
\label{appendix:ratification}

Once the constitutional governance rule produces an amendment, the constitutional consensus protocol~\cite{keidar2025constitutional} ratifies it. Constitutional consensus operates in epochs, each governed by the prevailing constitution. A constitutionally-valid amendment is submitted as a transaction, ratified by the consensus protocol, and takes effect in the next epoch. The protocol integrates the decision processes described above with a Byzantine-fault-tolerant consensus mechanism, ensuring amendments are both democratically decided and reliably enacted. We treat constitutional consensus as a black box for the purposes of this paper.

\section{Simulation: Full Sweep}
\label{appendix:simulation-full}

Section~\ref{subsection:simulation} reported the headline statistics, table, and findings. This appendix gives the complete sweep across all twenty-seven configurations: four settings (2D Euclidean, simplex, hypercube, permutations), each at three or four electorate sizes.

\begin{table}[h]
\caption{Compromise gap and Heuristic P across four settings and three electorate sizes. ``CG~$> 0$'' is the fraction of profiles (with $\text{OPT} > 0$) for which $\text{CG} > 0$. ``Gap-closing'' is the mean fraction of $\text{CG}$ closed by Heuristic P on profiles with $\text{CG} > 0$ on which P returns a non-$\bot$ output. ``Hit rate'' is the fraction of profiles on which P returns a non-$\bot$ output. ``Combined'' is their product, the effective gap closure over all profiles with $\text{CG} > 0$.}
\label{table:simulation-full}
\centering
\small
\begin{tabular}{@{}llrrrr@{}}
\toprule
Setting & $n$ & CG~$> 0$ & Gap-closing & Hit rate & Combined \\
\midrule
2D Euclidean (uniform) & 5 & 98.6\% & 41.1\% & 66.8\% & 27.5\% \\
2D Euclidean (uniform) & 11 & 99.6\% & 61.1\% & 87.8\% & 53.7\% \\
2D Euclidean (uniform) & 21 & 100\% & 79.4\% & 96.3\% & 76.5\% \\
2D Euclidean (uniform) & 51 & 96.0\% & 96.3\% & 98.0\% & 94.4\% \\
\midrule
Simplex ($m = 3$) & 5 & 93.0\% & 40.1\% & 60.0\% & 24.1\% \\
Simplex ($m = 3$) & 11 & 98.0\% & 64.4\% & 85.7\% & 55.2\% \\
Simplex ($m = 3$) & 21 & 98.0\% & 81.6\% & 95.0\% & 77.5\% \\
Simplex ($m = 4$) & 5 & 97.0\% & 52.4\% & 77.5\% & 40.6\% \\
Simplex ($m = 4$) & 11 & 99.0\% & 66.0\% & 94.0\% & 62.0\% \\
Simplex ($m = 4$) & 21 & 99.0\% & 81.1\% & 96.0\% & 77.9\% \\
\midrule
Hypercube ($|A| = 6$) & 5 & 39.4\% & 99.5\% & 39.2\% & 39.0\% \\
Hypercube ($|A| = 6$) & 11 & 26.2\% & 99.2\% & 26.0\% & 25.8\% \\
Hypercube ($|A| = 6$) & 21 & 11.7\% & 100\% & 11.7\% & 11.7\% \\
Hypercube ($|A| = 8$) & 5 & 68.5\% & 96.5\% & 66.3\% & 64.0\% \\
Hypercube ($|A| = 8$) & 11 & 51.4\% & 95.9\% & 49.4\% & 47.4\% \\
Hypercube ($|A| = 8$) & 21 & 35.5\% & 100\% & 35.5\% & 35.5\% \\
Hypercube ($|A| = 10$) & 5 & 85.0\% & 90.2\% & 78.7\% & 71.0\% \\
Hypercube ($|A| = 10$) & 11 & 73.0\% & 91.4\% & 67.5\% & 61.7\% \\
Hypercube ($|A| = 10$) & 21 & 51.0\% & 96.1\% & 49.0\% & 47.1\% \\
\midrule
Permutations ($m = 4$) & 5 & 27.4\% & 98.8\% & 27.1\% & 26.8\% \\
Permutations ($m = 4$) & 11 & 13.0\% & 97.3\% & 12.7\% & 12.4\% \\
Permutations ($m = 4$) & 21 & 4.8\% & 100\% & 4.8\% & 4.8\% \\
Permutations ($m = 5$) & 5 & 63.0\% & 81.8\% & 52.9\% & 43.3\% \\
Permutations ($m = 5$) & 11 & 47.5\% & 89.1\% & 42.5\% & 37.9\% \\
Permutations ($m = 5$) & 21 & 31.7\% & 98.9\% & 31.3\% & 31.0\% \\
Permutations ($m = 6$) & 5 & 85.0\% & 67.1\% & 65.3\% & 43.8\% \\
Permutations ($m = 6$) & 11 & 83.3\% & 77.4\% & 69.7\% & 53.9\% \\
Permutations ($m = 6$) & 21 & 72.0\% & 84.3\% & 62.7\% & 52.9\% \\
\bottomrule
\end{tabular}
\end{table}

\mypara{Reading the sweep} The pattern from the headline table generalises across all configurations. In continuous spaces (2D Euclidean, simplex), $\text{CG} > 0$ is near-universal at small $n$ and shrinks slowly with $n$; Heuristic P's gap-closing ratio rises from $\sim 40$--$50\%$ at $n = 5$ to $\sim 80$--$96\%$ at $n = 21$--$51$. In discrete spaces (hypercube, permutations), $\text{CG} > 0$ is less common---peaks themselves often coincide with the unconstrained optimum---and Heuristic P closes nearly the entire gap when there is one, with the gap-closing ratio frequently at or above $90\%$ across all configurations. The hypercube, a median space, exhibits the cleanest behaviour: gap-closing $\ge 90\%$ in every row.

\mypara{Scope} The simulation does not claim a worst-case approximation guarantee for Heuristic P; configurations are drawn from natural uniform-on-domain priors. Non-uniform priors, larger $m$ or $|A|$, alternative aggregators (e.g., the mean instance), and empirical evaluation on real cooperative-governance data are left as future work.

\end{document}